\newcommand{\e}{\varepsilon}
\newcommand{\s}{\sigma}
\renewcommand{\l}{\lambda}
\renewcommand{\O}{\Omega}
\newcommand{\g}{\gamma}
\theoremstyle{plain}
\newtheorem{thm}{Theorem}
\newtheorem{lem}{Lemma}
\theoremstyle{definition}
\newtheorem{definition}{Definition}
\title{Orphan-Free Anisotropic Voronoi Diagrams}
\author{Guillermo D. Canas and Steven J. Gortler\\ School of Engineering and Applied Sciences\\ Harvard University}
\date{}
\begin{document}
\maketitle

\begin{abstract}

We describe conditions under which an appropriately-defined anisotropic
Voronoi diagram of a set of sites in  Euclidean space 
is guaranteed to be composed of connected cells 
in any number of dimensions. 
These conditions are natural for problems in optimization and approximation, 
and algorithms already exist to produce sets of sites that satisfy them. 
\end{abstract}


\section{Introduction}

The anisotropic Voronoi diagram (AVD) is a fundamental data structure with wide practical application. 
In the definition of~\cite{LL2000}, an AVD over a Riemannian manifold is the Voronoi diagram of a set of sites with respect to 
  the geodesic distance associated with a Riemannian metric. 
From a formal viewpoint, this definition has several strengths. 
For instance, a simple argument can be used to show that the Voronoi regions of such an AVD 
are always connected. 

For problems in optimization, such as vector quantization on a Riemannian manifold, the above property makes it simpler to compute the AVD: 
	since each Voronoi region is connected we do not have to search for possible disconnected (orphan) pieces elsewhere. 
For problems in approximation, where we are often more interested in the dual simplicial complex, 
	the absence of orphan regions in the AVD makes it possible for strict approximation error bounds to be enforced on the dual and, 
	in some cases, it can make it possible to characterize the asymptotic size of the approximation~\cite{Gruber,enets}.
Geodesic distance on a Riemannian manifold is, however, very expensive to compute, since it involves finding the shortest geodesic path between two points, 
	and a practical AVD algorithm based on this distance is not currently feasible. 

By choosing a particular parametrization of a Riemannian manifold over a subset of Euclidean space, 
two approximations to the geodesic distance between two points naturally arise, both of which have been considered as basis for constructing AVDs. 
These approximations simplify the problem by considering the metric constant along any path between the two points, 
	but use a different choice of constant metric along the path. 
	In particular, to measure the distance between a given site and any point in the domain, 
		one approximation evaluates the metric only at the site, 
		while the other uses the metric at the point. 
	They are described in~\cite{LS} and~\cite{DW}. 
	We refer to them as Labelle/Shewchuk, and Du/Wang diagrams, respectively, or LS and DW diagrams for short. 
Although they are conceptually similar, guarantees of well-behaved-ness 
	have only been shown only for the case of LS diagrams, and only in two dimensions~\cite{LS}. 
These guarantees 
	come in the form of a condition which, if satisfied, guarantees connectedness of Voronoi cells 
	as well as the well-behavedness of its dual 2D triangulation (absence of triangle inversions).
	They also provide an iterative site-insertion algorithm that is guaranteed to produce a well-behaved output 
	if run for long enough. 

In this paper we present conditions under which these anisotropic 
	Voronoi diagrams 
	are well-behaved. 
These conditions are simple and intuitive, hold for any number of dimensions, and apply to both LS and DW diagrams. 

The condition we describe 
	requires that the generating set of sites form a sufficiently dense $\e$-net (with $\e$ being sufficiently small). 
This condition is quite natural, since it requires that the distribution of generating sites be roughly ``uniform" with respect to the input metric. 
	An iterative, greedy site-insertion algorithm~\cite{Gonz} exists to compute sets that satisfy it, 
		which works by iteratively inserting a new site at the point in the domain that is furthest from the current set of sites. 
Note that the net condition is natural both for optimization problems, such as quantization or clustering, 
	where we are interested in the primal Voronoi diagram~\cite{Gruber,VSA,KM++},
	as well as for approximation problems, where we are mostly interested in the dual simplicial complex. 
	In the former case, optimal quantization sites have been shown to satisfy the net property~\cite{Gruber,enets}, 
		while in the latter, nets have been used to construct asymptotically-optimal approximations~\cite{enets}.

The net condition is both a condition on the density of sites (cover property), and its relative distribution (packing property). 
While the packing property is clearly insufficient to guarantee that an AVD is well-behaved, 
	we show that the density property alone can be insufficient in cases when the net condition is. 
That is, even though there may be sufficiently dense covers that produce well-behaved AVDs, 
	there are $\epsilon$-nets that produce well-behaved AVDs where the $\epsilon$-cover property alone would not (the converse is not possible since every net is a cover). 
For sufficiently fine densities of sites, the combination of cover (density), and packing (relative position) properties is enough to ensure that both the DW and LS diagrams are well-behaved, 
	in any number of dimensions. 
Bounds on the minimum density required are given which, perhaps surprisingly, do not depend on the dimension. 


\section{Preliminary definitions}\label{Vsec:defs}


Given an Euclidean domain $\O\subset\mathbb{R}^n$, an ordinary Voronoi diagram of a set $V\subset\O$ is a partition of the domain into regions whose points are closest to the same element in $V$. 
In the case of LS or DW diagrams, the function used to measure closeness is not the natural distance. 
If we assume that we are given a Riemannian metric over $\O$, with coordinates $Q:\O\rightarrow\mathbb{R}^{n\times n}$ (where at each $p\in\O$, $Q_p$ is symmetric, positive definite), 
	we define the functions: $D_Q^{{ }^{LS}}(a,b) = \left[(a-b)^t Q_a (a-b)\right]^{1/2}$, and $D_Q^{{ }^{DW}}(a,b) = \left[(a-b)^t Q_b (a-b)\right]^{1/2}$
	(note $D_Q^{{ }^{LS}}(a,b) = D_Q^{{ }^{DW}}(b,a)$). 
An LS diagram (resp.\ DW diagram) of a set $V$ is the Voronoi diagram of $V$ with respect to the function $D_Q^{{ }^{LS}}$ (resp.\ $D_Q^{{ }^{DW}}$). 
The associated Voronoi regions of a site $v\in V$ are, respectively, 
	\[ R^{{ }^{LS}}_v = \{ p\in\O | \forall w\in V, D_Q^{{ }^{LS}}(v,p) \le D_Q^{{ }^{LS}}(w,p) \} \]
and
	\[ R^{{ }^{DW}}_v = \{ p\in\O | \forall w\in V, D_Q^{{ }^{DW}}(v,p) \le D_Q^{{ }^{DW}}(w,p) \} \]
Note that, because neither $D_Q^{{ }^{DW}}$ nor $D_Q^{{ }^{LS}}$ are symmetric, we must follow some convention on the order of its arguments. 
In particular, we place an element of $V$ always as first argument of $D_Q^{{ }^{LS}}$ and $D_Q^{{ }^{DW}}$. 
We follow this argument order convention for the rest of the paper. 

%
\vspace*{0.15in}
\noindent
{\bf Asymmetric $\e$-net}. 
In the remainder of the paper, it will be useful to consider $\e$-nets 
	with respect to a function that is not symmetric (e.g. $D_Q^{{ }^{DW}}$ and $D_Q^{{ }^{LS}}$). 
	Since the original definition of an $\e$-net assumes the use of a symmetric distance, 
	we must slightly modify it 
	in this asymmetric case.
As before, we follow the convention of placing elements of the net always 
	as first arguments to $D_Q^{{ }^{DW}}$ or $D_Q^{{ }^{LS}}$. 

\begin{definition}
\emph{An asymmetric $\e$-net with respect to a 
			function $D:\O\times\O\rightarrow\mathbb{R}$ is a set $V\in\O$ that satisfies:
	\begin{enumerate}
		\item $\forall p\in\O$, $\displaystyle{\min_{v\in V} D(v,p)} \le \e$. (asymmetric $\e$-cover property)
		\item $\forall v,w\in V$, $D(v,w) > \e$ \emph{or} $D(w,v) > \e$. (asymmetric $\e$-packing property)
	\end{enumerate}
}
\end{definition}
These properties are analogous to those of a regular net, but not identical (e.g. the packing property is weaker). 
Note that, even if $D$ above is not symmetric, we can still compute an asymmetric net using 
the greedy algorithm of~\cite{Gonz}, by being careful to follow the stated argument-order convention.
A simple induction argument reveals that, in the asymmetric case, the algorithm of~\cite{Gonz} always terminates by outputting a discrete set $V$ that satisfies the properties of an asymmetric net.
(In particular, the output of~\cite{Gonz} will not, in general, satisfy the stronger version of the asymmetric packing property: $\forall v,w\in V$, $D(v,w) > \e$ \emph{and} $D(w,v) > \e$). 

\section{Setup}\label{Vsec:setup}

Assume that we are given as input a 
metric over n-dimensional Euclidean space (in coordinates: a field of symmetric, positive definite (PD) matrices $Q:\O\rightarrow\mathbb{R}^{n\times n}$).
At every point in $\O$, there is an eigen-decomposition $Q=R\Lambda R^t$. 
	Let the symmetric PD square root matrix of $Q$ be $M = R\Lambda^{1/2}R^t$. 
As in~\cite{LS}, we use this square root matrix to analyze the Voronoi diagrams under consideration.  
In contrast to~\cite{LS}, we do not consider {any} square-root matrix $M'$ such that $M'^t M'=Q$ (all square roots can be written as $U M$, where $U$ is unitary: $U^t U=I$), but rather 
	concentrate on the unique square root that is also symmetric and positive definite.  
This distinction is important. In particular, as the following lemma shows, the symmetric PD square root has the same differentiability class as $Q$, 
	and, in particular, it is continuous wherever $Q$ is. \\

\begin{lem}
\label{Vlm:sqrtQ}
	Given $Q:\O\subset\mathbb{R}^n\rightarrow\mathbb{R}^{n\times n}$, symmetric, positive definite over a compact domain $\O$, 
		then the unique symmetric, positive definite matrix $M$ that satisfies $M^t M = Q$ is $M\in\mathcal{C}^k$ if and only if $Q\in\mathcal{C}^k$ 
\end{lem}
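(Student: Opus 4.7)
The plan is to prove the easy direction first and then use the inverse function theorem for the hard direction.

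First I would observe that if $M\in\mathcal{C}^k$, then $Q = M^t M = M^2$ is $\mathcal{C}^k$ as well since matrix multiplication is smooth (it is a polynomial in the entries). So the interesting direction is the converse: $Q\in\mathcal{C}^k \Rightarrow M\in\mathcal{C}^k$.

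For the converse, the natural approach is to view the squaring operation $F(A) = A^2$ as a smooth map from the open cone $\mathrm{Sym}^+_n$ of symmetric positive definite matrices to itself, and invoke the inverse function theorem. The key step is to compute the Fr\'echet derivative at an SPD matrix $A$, which is the linear map $DF(A):\mathrm{Sym}_n\to\mathrm{Sym}_n$ given by $DF(A)[H] = AH + HA$. I would then show this map is invertible whenever $A$ is SPD: diagonalizing $A = R\,\mathrm{diag}(\lambda_i)\,R^t$ with $\lambda_i>0$, the equation $AH+HA=0$ written in the eigenbasis of $A$ reads $(\lambda_i+\lambda_j)\tilde H_{ij} = 0$, which forces $\tilde H = 0$. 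So $DF(A)$ has trivial kernel, hence is invertible on $\mathrm{Sym}_n$. This verifies the hypothesis of the inverse function theorem, yielding a local $\mathcal{C}^\infty$ inverse $F^{-1}$ defined near $F(A)=A^2$ whose image consists of SPD matrices.

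Uniqueness of the SPD square root (a standard fact, provable via simultaneous diagonalization) then guarantees that this local inverse agrees globally with the SPD square root on $\mathrm{Sym}^+_n$, so the map $Q\mapsto M$ is a $\mathcal{C}^\infty$ map from $\mathrm{Sym}^+_n$ to $\mathrm{Sym}^+_n$. Since $\mathcal{C}^k$ regularity is preserved under composition with $\mathcal{C}^\infty$ maps, and $M(p) = F^{-1}(Q(p))$ for every $p\in\O$, we conclude $M\in\mathcal{C}^k$ whenever $Q\in\mathcal{C}^k$. Compactness of $\O$ is used only to ensure that the eigenvalues of $Q$ stay uniformly bounded away from zero, so that $F^{-1}$ is applied in a fixed neighborhood of $\mathrm{Sym}^+_n$ on which it is smooth.

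The only mild obstacle I anticipate is being careful about domains: one must restrict $F$ to $\mathrm{Sym}^+_n$ (not all of $\mathrm{Sym}_n$) so that $DF$ is invertible and so that the inverse picks out the correct (positive) square root, rather than any of the $2^n$ symmetric roots of $Q$. Once that is set up, the proof is essentially a direct application of the inverse function theorem.
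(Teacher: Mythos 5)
Your proof is correct, but it takes a genuinely different route from the paper's. The paper argues via the matrix power series $S=\sum_{i\ge 0}\frac{(-1)^i(2i)!}{(1-2i)(i!)^2 4^i}(Q-I)^i$: it shows this converges to the SPD square root $M$ whenever $\rho(Q)<2$, differentiates the partial sums term by term, and uses uniform convergence on the compact domain $\O$ (together with a scaling trick $Q\mapsto Q/\max_p\rho(Q_p)$ to reduce the general case to $\rho(Q)<2$) to conclude that the derivatives of the limit exist up to order $k$. You instead invoke the inverse function theorem on the squaring map $F(A)=A^2$ restricted to the open cone $\mathrm{Sym}^+_n$, verifying that $DF(A)[H]=AH+HA$ has trivial kernel by the Sylvester-equation argument $(\lambda_i+\lambda_j)\tilde H_{ij}=0$. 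Your approach is cleaner in a few ways: it yields at once that $Q\mapsto Q^{1/2}$ is $\mathcal{C}^\infty$ on the entire cone, it avoids any scaling trick or convergence-radius bookkeeping, and it in fact does not need compactness of $\O$ at all (your closing remark overstates its role; since $\mathrm{Sym}^+_n$ is open and the SPD square root is smooth on all of it, $M(p)=F^{-1}(Q(p))$ is $\mathcal{C}^k$ wherever $Q$ is, compact domain or not). What the paper's series approach buys is self-containedness and explicitness — it exhibits $M$ as a concrete limit and only needs d'Alembert's test plus the classical theorem that uniform convergence of derivatives allows interchanging limit and derivative, rather than the inverse function theorem. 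One small point worth tightening in your write-up: the inverse function theorem gives only a local inverse a priori, so you should note explicitly that $F:\mathrm{Sym}^+_n\to\mathrm{Sym}^+_n$ is a bijection (surjectivity: every SPD matrix has an SPD root; injectivity: uniqueness of the SPD root) with everywhere-invertible derivative, hence a global $\mathcal{C}^\infty$ diffeomorphism, before composing with $Q(\cdot)$.
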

\begin{proof}
	If $M\in\mathcal{C}^k$ then clearly $Q=M^t M\in\mathcal{C}^k$. 
	To show that the converse is also true assume $Q\in\mathcal{C}^k$ and consider,  
	at any point $p\in\O$, the eigen-decomposition $Q=R\Lambda R^t$, where $\Lambda$ is a diagonal matrix with positive entries.
	We claim that $M=R\Lambda^{1/2} R^t$. 
	Clearly, $M$ is symmetric, positive definite, and $M^t M=Q$.
	Any other positive definite matrix $M'$ with $M'^t M'=Q$ is related to $M$ by a unitary matrix as $M'=U M$. 
		But, since $M$ is symmetric, $M'$ can only be symmetric if $U=I$. 
		Therefore $M=R\Lambda^{1/2} R^t$ is the only square root of $Q$ that is symmetric PD. 
	
	Just like the Taylor expansion of the (scalar) square root function $\sqrt{x}=\sum_{n=0}^{\infty}\frac{(-1)^n(2n)!}{(1-2n)(n!)^2(4^n)}(x-1)^n$, which converges for $|x-1|<1$, 
		we can define the series $S=\sum_{i=0}^{\infty}\frac{(-1)^i(2i)!}{(1-2i)(i!)^2(4^i)}(Q-I)^i$, 
		which, by d'Alembert's ratio test~\cite{Rudin}, is absolutely convergent for symmetric PD matrices with spectral radius $\rho(Q)<2$ 
				since the absolute value of the coefficients in the series is decreasing, and the norm of $(Q-I)^i$ is exponentially decreasing with $i$. 
	We show that if $\rho(Q)< 2$, then the series converges to $M$, the unique symmetric PD square root. 
	Clearly, $S$ is symmetric and, given an eigenvector $e_i$ of $Q$ with eigenvalue $\l_i$, 
		it is $e_i^t S e^i = \sum_{n=0}^{\infty}\frac{(-1)^n(2n)!}{(1-2n)(n!)^2(4^n)}(\l_i -1)^n=\l_i^{1/2}$.
	Hence $S=M$. 
	
	The derivatives $S_n^{(j)} = \sum_{i=0}^{n}\frac{(-1)^i(2i)!}{(1-2i)(i!)^2(4^i)} d^{(j)}(Q-I)^i$ of the partial sums are defined for $j=0,\dots,k$, and have the same convergence region as the original series $S$. 
	Since $Q$ is symmetric PD, $\rho(Q)<2$ and $\O$ is compact, $Q-I$ has maximum spectral norm $\rho_{\max}=\max_{p\in\Omega} \rho(Q-I)<1$. 
		Thus the slowest rate of convergence occurs at points $p\in\Omega$ where $\rho(p)=\rho_{\max}$, and therefore the $S_n^{(j)}$, $j=0,\dots,k$ converge uniformly inside $\O$. 
		Uniform convergence implies that the limit of derivatives is the derivative of the limit~\cite{Rudin}: $\lim_{n\rightarrow\infty} S_n^{(j)} = d^{(j)}S = d^{(j)} M$. 
		Therefore $M$ is at least in the same differentiability class as $Q$. 
	
	The above applies to matrices $Q$ with $\rho(Q)<2$. In the general case, scale $Q$ by $1/\max_{p\in\O}\rho(Q_p)$, 
		apply the lemma, and rescale back the resulting $M$ by $\max_{p\in\O}\rho(Q_p)^{1/2}$. 
\end{proof}
\vspace*{0.2in}

Because $Q$ is spatially-varying, given two points $a,b\in\O$, it will in general be $D_Q^{{ }^{DW}}(a,b) \neq D_Q^{{ }^{DW}}(b,a)$, and likewise for $D_Q^{{ }^{LS}}$. 
The amount of asymmetry will depend on how different $Q_a$ is from $Q_b$. 
In particular, by the sub-multiplicative property of the spectral matrix norm, 
\begin{eqnarray*}
	{D_Q^{{ }^{DW}}(a,b)}/{D_Q^{{ }^{DW}}(b,a)} &=& {D_Q^{{ }^{LS}}(b,a)}/{D_Q^{{ }^{LS}}(a,b)} = {\|M_b (a-b)\|}/{\|M_a (a-b)\|} \\
			&=& \frac{\| M_b M_a^{-1} M_a (a-b) \| }{\| M_a (a-b) \|} \le \rho(M_b M_a^{-1})
\end{eqnarray*}
and a similar argument shows that $D_Q^{{ }^{DW}}(a,b) / D_Q^{{ }^{DW}}(b,a) \ge \rho_m(M_b M_a^{-1})$, 
	where $\rho(A)$ is the spectral norm 
	and $\rho_m(A)$ is the smallest of the absolute values of the eigenvalues of $A$
	(that is: $\rho(A) ={\sup_{r\ne 0} \| A r \|/ \|r\|}$ and $\rho_m(A) ={\inf_{r\ne 0} \|A r\| / \|r\|}$). 
Note that, for square matrices, such as the ones considered here, the spectral norm is the same as the induced $L^2$ operator norm.

Given a point $a\in\O$, it is possible to bound the amount of asymmetry in $D_Q^{{ }^{DW}}$ (resp. $D_Q^{{ }^{LS}}$) inside an appropriately-defined neighborhood of $a$. 
To this end, we introduce the following definition, which is applicable to continuous metrics, whose square root $M$ is, by Lemma~\ref{Vlm:sqrtQ}, also continuous. 

\begin{definition}\label{def:sigma0}
The \emph{maximum variation}  of a $\mathcal{C}^0$ metric $Q$ is the smallest constant $\s$ such that for all $a,b\in\O$, it is
\[   \rho\left(M_b M_a^{-1} - I\right) \le \s \cdot \| M_a (a-b) \|   \]
where $M$ is the symmetric, positive definite square root of $Q$. 
\end{definition}
\vspace*{0.2in}
Loosely speaking, $\s$ is a Lipschitz-type bound on the rate of variation of $M$ relative to itself.
In the sequel, it will be assumed that $\s$ is finite. 
In particular, this will always be the case if $\O$ is compact.

We can use the above definitions to find bounds on 
the asymmetry in the associated function $D_Q^{{ }^{DW}}$ (resp. $D_Q^{{ }^{LS}}$). 
The following lemma 
	shows that, if a point $b\in\O$ is inside a certain neighborhood of point $a\in\O$, and $\s$ is sufficiently small, 
	then $\rho(M_b M_a^{-1})$ can be bounded from above, and $\rho_m(M_b M_a^{-1})$  from below, 
	which implies that $D_Q^{{ }^{DW}}(a,b)$ and $D_Q^{{ }^{DW}}(b,a)$ (resp. $D_Q^{{ }^{LS}}(a,b)$ and $D_Q^{{ }^{LS}}(b,a)$) must be similar. 
	
\begin{lem}\label{Vlm:cover0}
	Given $\e>0$ and $Q\in\mathcal{C}^0$ with maximum variation $\s$, then for all $a,b\in\O$ with $\|M_a(a-b)\|\le\e$ it is 
		\[1-\e\s \le  \rho_m(M_b M_a^{-1}) \le {\|M_b (a-b)\|}/{\|M_a (a-b)\|}  \le \rho(M_b M_a^{-1}) \le 1+\e\s \]
\end{lem}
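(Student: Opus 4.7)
The plan is to derive the outer bounds $1-\e\s \le \rho_m(M_b M_a^{-1})$ and $\rho(M_b M_a^{-1}) \le 1+\e\s$ directly from the maximum variation bound of Definition~\ref{def:sigma0}, and then note that the inner bounds are essentially already established in the discussion preceding the lemma.

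First, I would use the hypothesis $\|M_a(a-b)\| \le \e$ together with Definition~\ref{def:sigma0} to conclude
\[
\rho\bigl(M_b M_a^{-1} - I\bigr) \le \s \cdot \|M_a(a-b)\| \le \e\s.
\]
Since $\rho(\cdot)$ is the induced $L^2$ operator norm, this tells me that for every nonzero $r$,
\[
\bigl\|(M_b M_a^{-1} - I)\,r\bigr\| \le \e\s \,\|r\|.
\]

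Next I would apply the triangle inequality in both directions. For the upper bound, for any nonzero $r$,
\[
\|M_b M_a^{-1} r\| \le \|r\| + \|(M_b M_a^{-1}-I)r\| \le (1+\e\s)\,\|r\|,
\]
so taking the supremum gives $\rho(M_b M_a^{-1}) \le 1+\e\s$. For the lower bound, the reverse triangle inequality yields
\[
\|M_b M_a^{-1} r\| \ge \|r\| - \|(M_b M_a^{-1}-I)r\| \ge (1-\e\s)\,\|r\|,
\]
so taking the infimum gives $\rho_m(M_b M_a^{-1}) \ge 1-\e\s$.

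Finally, for the two inner inequalities, I would substitute the specific choice $r = M_a(a-b)$ into the definitions of $\rho$ and $\rho_m$; then $M_b M_a^{-1} r = M_b(a-b)$, so $\|M_b(a-b)\|/\|M_a(a-b)\|$ is a particular value of the Rayleigh-type ratio $\|M_b M_a^{-1} r\|/\|r\|$, automatically lying between its infimum $\rho_m(M_b M_a^{-1})$ and its supremum $\rho(M_b M_a^{-1})$. Chaining these four bounds together gives the stated inequality. There isn't really a substantive obstacle here: the entire argument is a direct unpacking of Definition~\ref{def:sigma0} through the triangle inequality, and the only subtlety is making sure to use the induced operator norm formulation of $\rho$ and $\rho_m$ rather than eigenvalue-based definitions, since $M_b M_a^{-1}$ need not be symmetric.
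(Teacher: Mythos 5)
Your proof is correct and structurally mirrors the paper's: both bound $\rho(M_b M_a^{-1}-I)$ by $\e\s$ via Definition~\ref{def:sigma0}, sandwich the Rayleigh-type ratio $\|M_b(a-b)\|/\|M_a(a-b)\|$ between $\rho_m(M_b M_a^{-1})$ and $\rho(M_b M_a^{-1})$, and then perturb away from the identity. The one place you diverge is the lower bound $\rho_m(M_b M_a^{-1})\ge 1-\e\s$: the paper justifies the step $\rho_m(A+I)\ge 1-\rho(A)$ by reasoning about eigenvalues \emph{of a symmetric matrix} $A$ and then applies it with $A=M_b M_a^{-1}-I$, which is generically not symmetric (it is only similar to a symmetric matrix, and similarity does not preserve $\rho$ and $\rho_m$ as the paper defines them, namely the largest/smallest singular values). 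Your derivation of the same bound via the reverse triangle inequality on $\|r+(M_b M_a^{-1}-I)r\|$ works directly with the operator-norm formulation and therefore avoids that gap; it is the cleaner justification, and your closing remark about not using eigenvalue-based definitions for the non-symmetric $M_b M_a^{-1}$ is exactly the right thing to be wary of.
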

\begin{proof}
The upper bound follows from the sub-multiplicative property of the spectral norm and the definition of $\s$:
	\begin{eqnarray*}
	  {\| M_b (a-b) \|}/{\| M_a (a-b)\|} &=& {\| M_b M_a^{-1} M_a (a-b)\|}/{\| M_a (a-b)\|} \le  \rho(M_b M_a^{-1}) \\
	  			&=& \rho(M_b M_a^{-1} - I + I)  \le 1 + \rho(M_b M_a^{-1} - I) \\
				&\le& 1 + \|M_a(a-b)\|\s \le 1 + \e\s
	\end{eqnarray*}

If $\l_i$ are the eigenvalues of a symmetric matrix $A$, then 
	\begin{eqnarray*}
		\rho_m(A+I) &=& \min_i |\l_i + 1| \ge \min_i | |1| - |\l_i | | \\
					&\ge& \min_i 1 - |\l_i| =  1 - \max_i | \l_i | = 1 - \rho(A)
	\end{eqnarray*}
	and therefore
	\begin{eqnarray*}
	  {\| M_b (a-b) \|}/{\| M_a (a-b)\|} &= & {\| M_b M_a^{-1} M_a (a-b)\|}/{\| M_a (a-b)\|} \\
	  		&\ge& \rho_m(M_b M_a^{-1} - I + I)  \ge 1 - \rho(M_b M_a^{-1} - I) \\
			&\ge& 1 - \|M_a(a-b)\|\s \ge 1 - \e\s
	\end{eqnarray*}
\end{proof}
\vspace*{0.1in}

In the remainder of the paper, we establish conditions for anisotropic Voronoi diagrams to be orphan-free. 
In particular, the set of sites considered will be asymmetric $\e$-nets, where $\e$ must be sufficiently small in relation to the constant $\s$ above, 
	which depends on the input metric and, vaguely speaking, provides an upper bound on the rate of change of the metric.

Note, however, that, in practice, it is not necessary to compute $\s$ to find a sufficiently small $\e$ that guarantees that Voronoi regions are well-behaved. 
Instead, it is possible to simply run the greedy algorithm of~\cite{Gonz}, which in our case outputs asymmetric $\e$-nets whose $\e$ decreases with each iteration, 
	until the resulting Voronoi diagram is orphan free. This is because, at each iteration, the algorithm of~\cite{Gonz} must compute the closest site to each point in the domain -- 
		a task that is equivalent to computing the Voronoi diagram for the current set of sites. Therefore checking at each stage whether the current diagram is orphan-free can simply be a by-product of the 
		asymmetric $\e$-net computation algorithm. 
The proofs in this paper simply guarantee that there is a small-enough $\e$ for which the resulting asymmetric $\e$-net produces an orphan-free Voronoi diagram, 
	and thus that the above algorithm stops (a proof involves the fact that the Voronoi radius of a site can be made as small as desired by simply introducing more sites in the $\e$-net, 
		which follows from the definition of the DW/LS distances). 
The precise bounds in Theorems~\ref{Vth:noorphans0} and~\ref{Vth:LSnoorphans0} may also serve to give some indication of how small $\e$ will need to be.

\section{Orphan-free anisotropic Voronoi diagrams}\label{partI}

This section shows that, given a continuous metric, the associated DW and LS diagrams of an asymmetric $\e$-net, for sufficiently small $\e$, are orphan-free. 
Theorems~\ref{Vth:noorphans0} and~\ref{Vth:LSnoorphans0} 
	state conditions under which this holds. 
In particular the criteria for $\e$ to be sufficiently small will be a certain relation between $\e$ and $\s$, 
	the maximum variation of the metric, which 
	does not depend on the dimension. 
	(Specifically, 
		$\e\s\le 0.09868$ for DW diagrams, and $\e\s\le 0.0584$ for LS diagrams.)
	
Since the (asymmetric) $\e$-net property 
	is a combination of an $\e$-packing and $\e$-cover properties, 
	it is natural to consider whether any of these properties is, by itself, sufficient to guarantee that DW and LS diagrams are well-behaved. 
Clearly, the $\e$-packing property cannot be sufficient since any two sites form an $\e$-packing for any $\e$ smaller than the distance between them. 
The case of $\e$-covers is more subtle: for a particular choice of metric $Q$, 
	it is possible that a sufficiently small $\e'$ exists such that every $\e'$-cover produces well-behaved DW and LS diagrams. 
When we consider all possible choices of $Q$, however, the required $\e'$ may be arbitrarily small relative to $\s$ 
and, unlike for $\e$-nets, there is no constant $c>0$ such that, for all choices of $Q$, an $\e$-cover with $\e\s\le c$ is always guaranteed to produce orphan-free DW and LS diagrams.

\begin{figure}[t]
\begin{center}
\subfigure[]{\includegraphics[width=2.0in]{./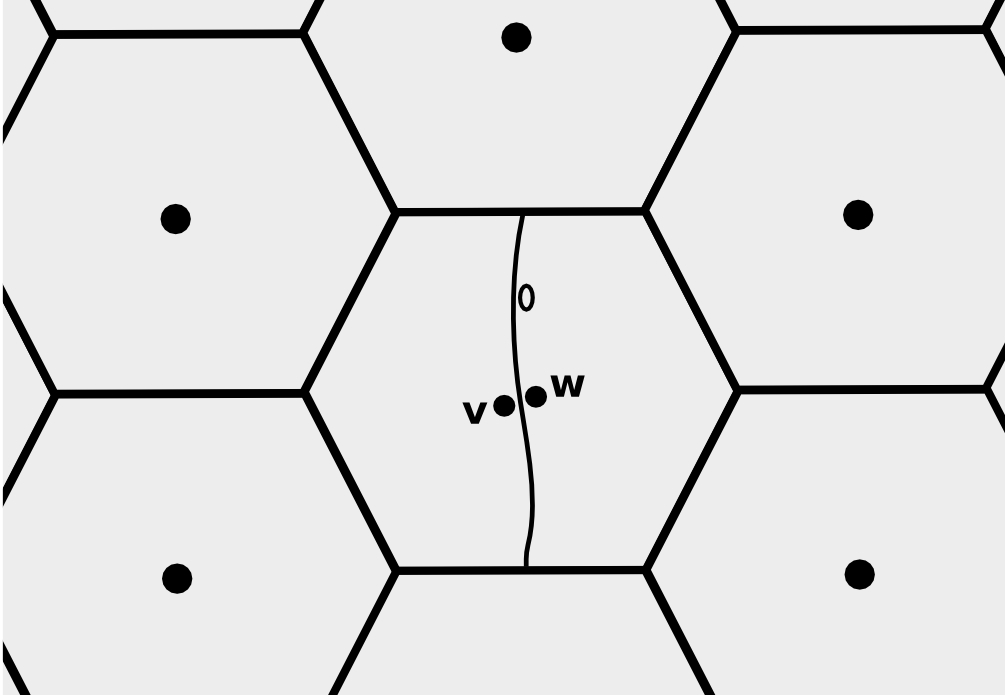}}\quad\quad
\subfigure[]{\includegraphics[width=2.0in]{./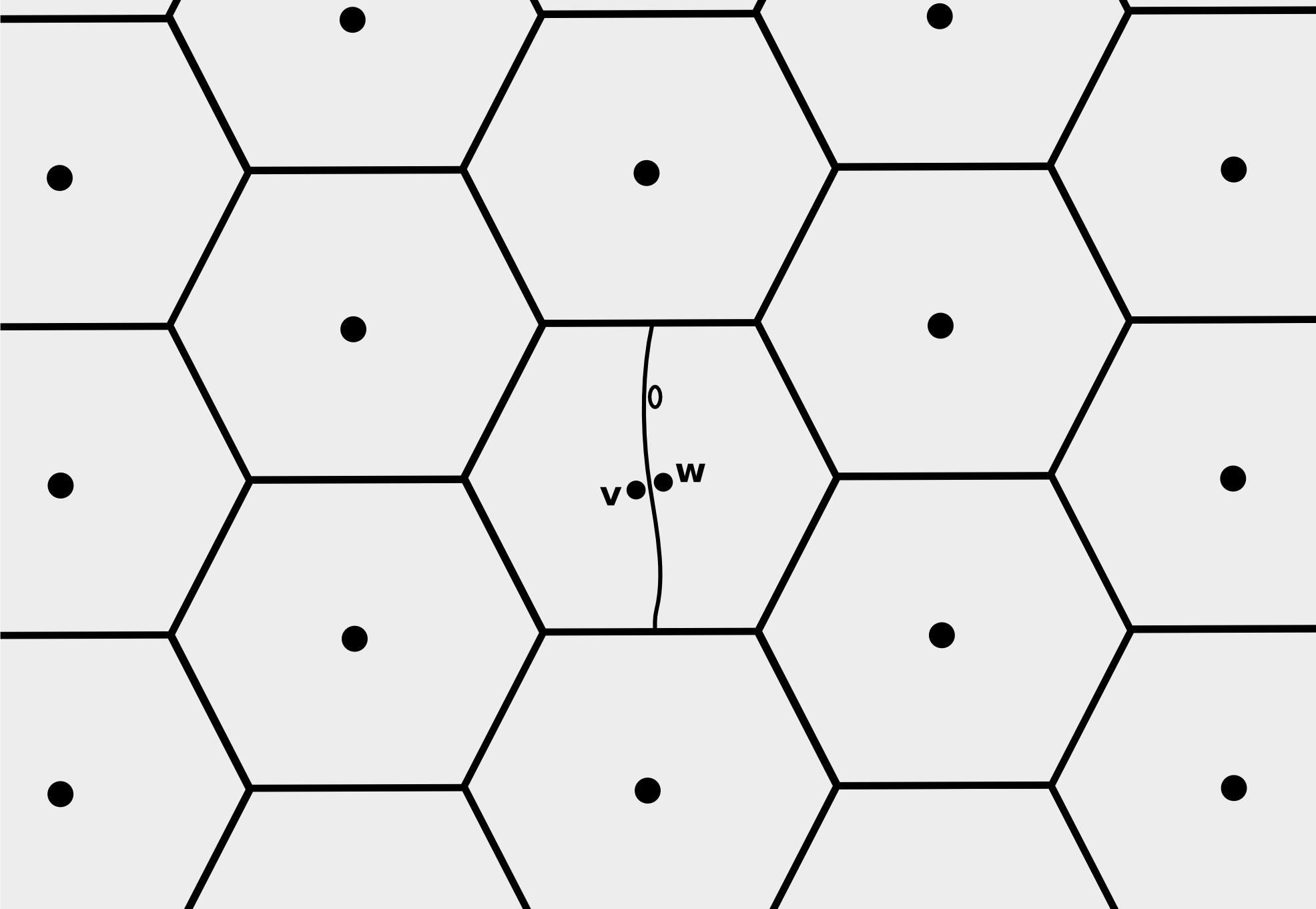}}
\subfigure[]{\includegraphics[width=2.0in]{./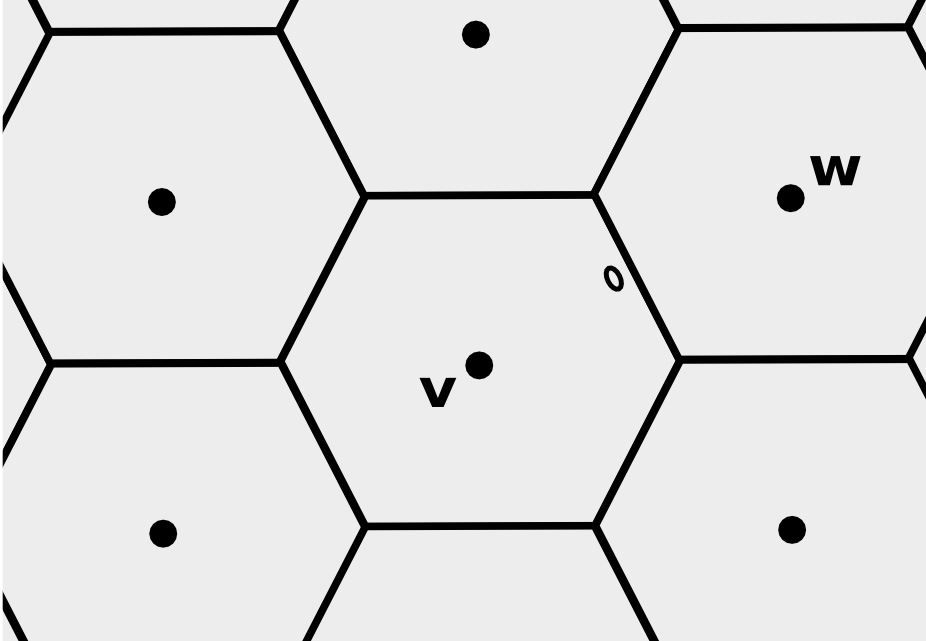}}\quad\quad
\subfigure[]{\includegraphics[width=2.0in]{./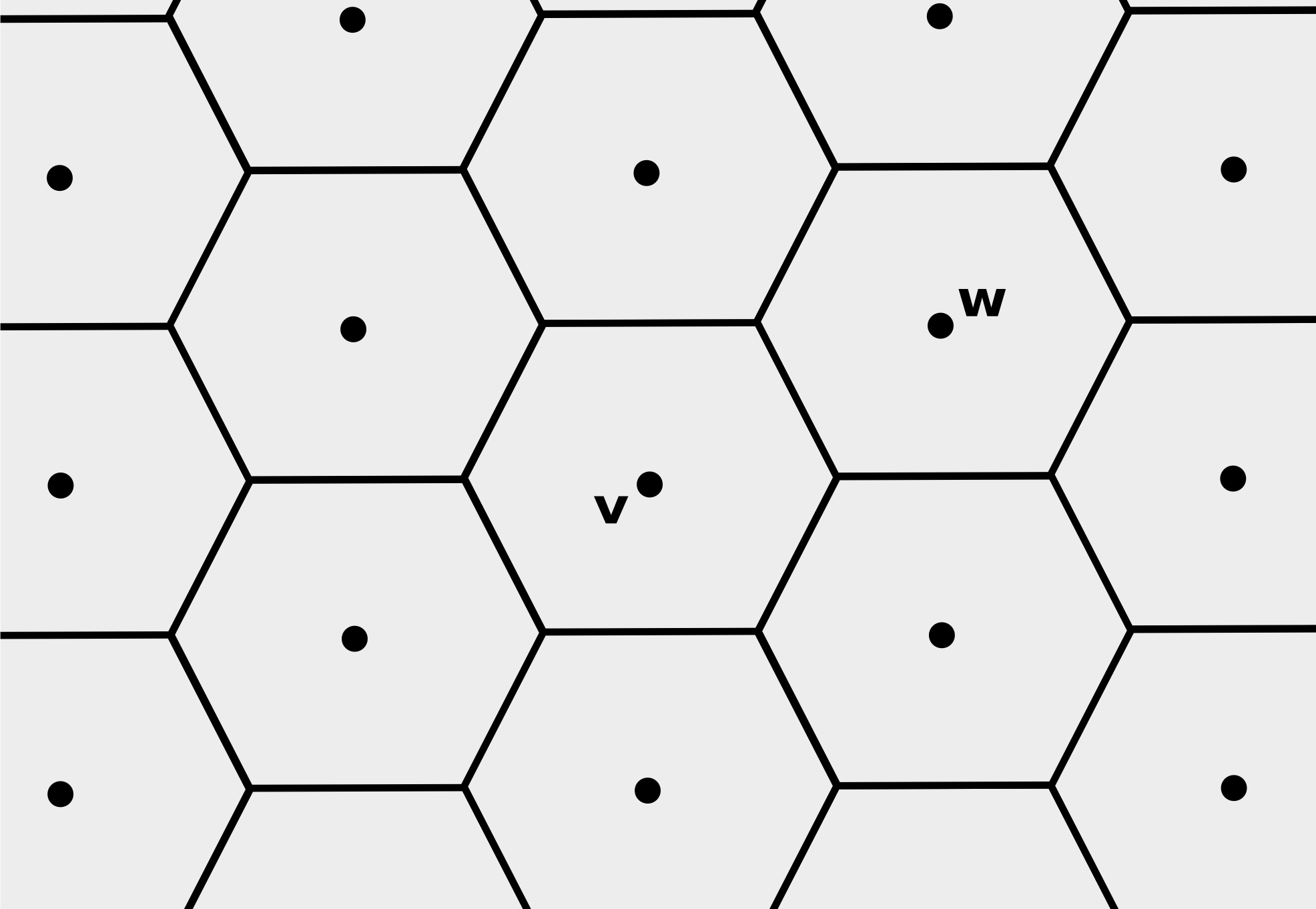}}
\caption{
If two sites ($v$ and $w$) are very close-together, a small perturbation in $D_Q^{{ }^{DW}}$ can cause an orphan (top left), even if the sites are very dense with respect to $D_Q^{{ }^{DW}}$ (top right). 
If sites are sufficiently spaced apart relative to each other (bottom left), orphan regions need sufficiently-large (relative) fluctuations in $D_Q^{{ }^{DW}}$ to appear. 
	In this case, placing the sites more densely-packed with respect to $D_Q^{{ }^{DW}}$ reduces the relative fluctuation, which eliminates the orphans (bottom right). 
}
\label{fig:cover}
\end{center}
\end{figure}

To see this, consider the diagram of Fig.~\ref{fig:cover}. 
In the right column, the set of sites (black dots) is dense enough over $\O$ so that $Q$ is roughly constant inside each Voronoi region. 
There is, however, some small variation in $Q$. 
We can place two sites $v,w\in V$ very close together (but not coinciding) such that, even for small $\e$, a very small change in $Q$ away from them causes an orphan region to appear (top right). 
A point in the orphan region, near the interface between $v$'s and $w$'s Voronoi regions, 
	``sees" both $v$ and $w$ as being at approximately the same distance, 
	but a very small variation in $Q$ has made the points in the orphan region be slightly closer to $v$, even though they are surrounded by points that are slightly closer to $w$. 
Although for a particular choice of metric $Q$ there may be a sufficiently small choice of $\e$ for which all $\e$-covers are guaranteed to produce well-behaved DW and LS diagrams, 
	the fact that the variation in $Q$ described above can be arbitrarily small means that the requirement on $\e$ may be arbitrarily strict, depending on the choice of $Q$. 

\subsection{Orphan-free Du/Wang diagrams}\label{Vsec:noorphans0}

\begin{figure}[t]
\begin{center}
\subfigure[]{\includegraphics[width=2.0in]{./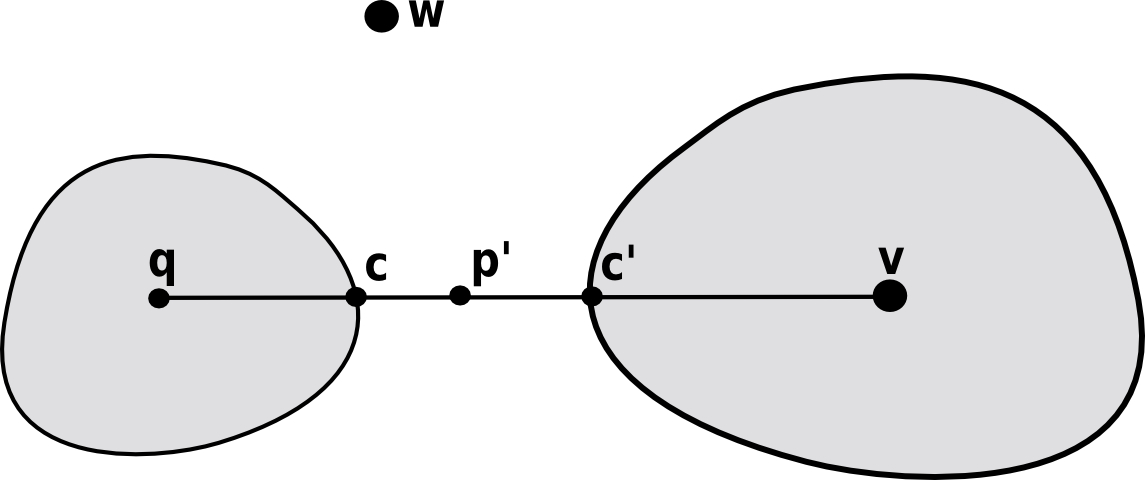}}\quad\quad
\subfigure[]{\includegraphics[width=2.0in]{./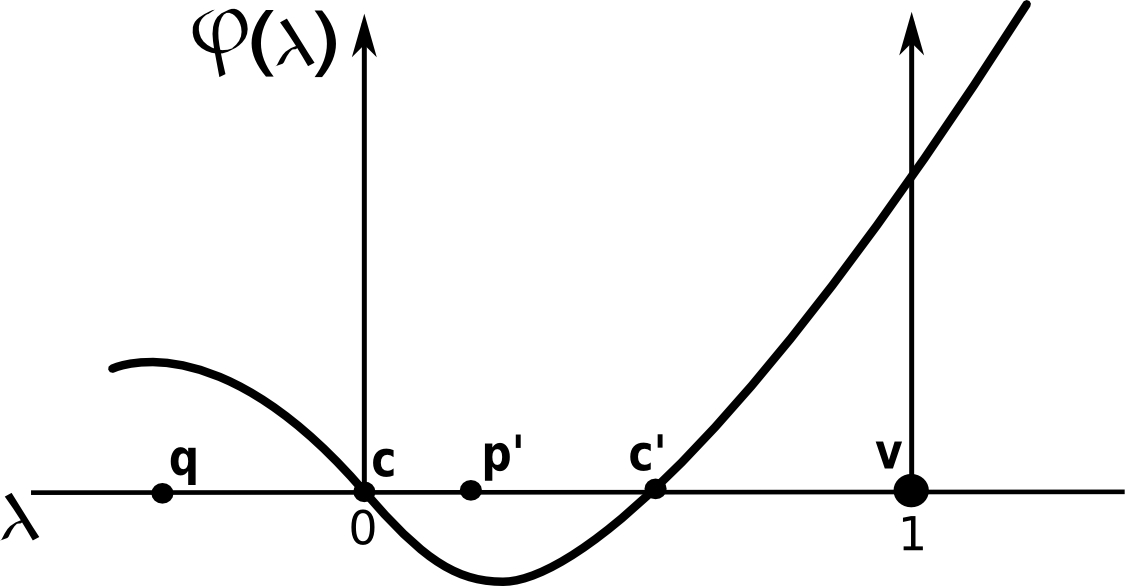}}\quad\quad
\caption{The diagrams for the proofs of Sections~\ref{Vsec:noorphans0}, and~\ref{Vsec:LSnoorphans0}.}
\label{Vfig:noorphans0}
\end{center}
\end{figure}

Given a continuous metric $Q$ and its associated distance $D_Q^{{ }^{DW}}$, 
	we show here that an asymmetric $\e$-net with respect to $D_Q^{{ }^{DW}}$, with sufficiently small $\e$, has Voronoi regions that are 
	always star-shaped (with respect to the Euclidean distance) from their generating sites
	and, in particular, they are connected. 
We prove this by showing that, if a point $q\in\O$ belongs to an orphan region of some site $v\in V$, as in the diagram (Fig.~\ref{Vfig:noorphans0}.a), 
	then the segment $\overline{qv}$ connecting them must also belong to the Voronoi region of $v$, 
	contradicting the fact that the Voronoi region that contains $q$ is disconnected from $v$. 

More specifically, 
since $q$ is in an orphan region, 
the segment $\overline{qv}$ must contain a point $p'$ 
	that is closer to a different site $w\in V$ ($p'$ is in the interior of the Voronoi region of $w\ne v$). 
However, we show that $p'$ cannot be closer to any $w\ne v$, reaching a contradiction. 
In conclusion, $\overline{qv}$ must belong to the Voronoi region of $v$, and so $q$ cannot be in an orphan region. 
Additionally, this shows that every Voronoi region must be star-shaped with respect to its generating site. 

The details of the proof follow. 
In particular, we will see that, if $q$ is in an orphan region of $v\in V$, and $Q$ is continuous, then by the intermediate value theorem, there must be 
	two points $c,c'\in\overline{qv}$ that are at equal distance to $v$ and to some other $w\in V$. 
	We use the $\e$-cover and $\e$-packing properties of $V$ to show that, for sufficiently small $\e$, the existence of such $c,c'$ leads to a contradiction. 

Assume 
	$q\in R^{{ }^{DW}}_v$ and $p'\in R^{{ }^{DW}}_w$, where $R^{{ }^{DW}}_v$, $R^{{ }^{DW}}_w$ are the Voronoi regions of $v$ and $w$ respectively. 
Then there must be a point $c\in\overline{qp'}$ between them that belongs to both Voronoi regions, such that $D_Q^{{ }^{DW}}(v,c) = D_Q^{{ }^{DW}}(w,c)$. 
	Since $V$ is an $\e$-cover, this point must also satisfy $D_Q^{{ }^{DW}}(v,c) = D_Q^{{ }^{DW}}(w,c) \le \e$. 

Consider now the parametrized segment $p(\l) = c (1-\l) + v \l$ with $\l\in[0,1]$. 
Letting $m=(v+w)/2$, define the function
%
\[ 
	\phi(p(\l)) = \e^{-2} \frac{1}{2} \left[D_Q^{{ }^{DW}}(w,p(\l)) - D_Q^{{ }^{DW}}(v,p(\l))\right] = \e^{-2} (w-v)^t Q_{p(\l)} (m - p(\l))
\]
which is plotted in Fig.~\ref{Vfig:noorphans0}.b. Note that $\phi(p(\l))$ is continuous with respect to $\l$ by virtue of the fact that, by Lemma~\ref{Vlm:sqrtQ}, it is $M\in\mathcal{C}^0$.

Since $c$ is equidistant to $v,w$, it is $\phi(c)=0$. 
For $\l>0$, $\phi$ becomes negative at $p'$, since $p'$ is closer to $w$ than to $v$, and then 
	becomes positive at $v$ (since $v$ is closer to $v$ than to $w$). 
Because it is $\phi(p')<0$ and $\phi(v)>0$, and since it is $\phi\in\mathcal{C}^0$, 
	then there must be an intermediate point $c'\in\overline{p'v}$ with $\phi(c')=0$. 
Finally, because $\phi(c) = \phi(c')=0$, it is 
\begin{eqnarray*}
 0&=&\frac{\|v-c\|}{\|c-c'\|} \left(\phi(c) - \phi(c')\right) = \e^{-2}  \frac{\|v-c\|}{\|c-c'\|}  (w-v)^t \left[ Q_{c'}(m-c') - Q_c(m-c)\right] \\
	&=& \e^{-2}  \frac{\|v-c\|}{\|c-c'\|} \left[ (c-c')^t Q_c (w-v) + (w-v)^t(Q_{c'}-Q_c)(m-c')\right] \\
	&=& \e^{-2} (c-v)^t Q_c (w-v) + \e^{-2}  \frac{\|v-c\|}{\|c-c'\|}  (w-v)^t (Q_{c'}-Q_c)(m-c')
\end{eqnarray*}

If we define $\alpha = |\e^{-2}  \frac{\|v-c\|}{\|c-c'\|}  (w-v)^t (Q_{c'}-Q_c)(m-c')|$, and 
	$\beta = |\e^{-2} (c-v)^t Q_c (w-v)|$, then
\[ \frac{\|v-c\|}{\|c-c'\|} |\phi(c) - \phi(c')| \ge \beta - \alpha \]
We can reach a contradiction by showing that $\phi(c) - \phi(c')$ does not vanish. 
To do this, it suffices to bound $\alpha$ from above, and $\beta$ from below in such a way that their difference is always positive. 
In particular, we will see that $\alpha$ can be made arbitrarily small by 
	requiring $V$ to be an $\e$-cover of sufficiently small $\e$. 
To bound $\beta$ from below, on the other hand, it is not sufficient for $V$ to form a sufficiently dense cover. 
$\beta$ is sensitive to both the density of sites in $V$, as well as their relative distribution. 
It is, however, possible to find a sufficiently-high lower bound of $\beta$ by requiring $V$ to be an asymmetric $\e$-net. 
The asymmetric $\e$-net condition is therefore sufficient to bound both $\beta$ from below, and $\alpha$ from above, in such a way as to ensure 
that $\phi(c)-\phi(c')$ doesn't vanish, creating a contradiction and concluding the proof. 
The following two lemmas provide the relevant bounds for $\alpha$ and $\beta$. 
Auxiliary lemmas from the Appendix are used in the proofs. 
\\


\begin{lem}\label{Vlm:alpha0}
	Given an asymmetric $\e$-cover $V$,  if $v,w\in V$ are Voronoi-neighbors, and it is $q\in R^{{ }^{DW}}_v$ and $c,c'$ and $m$ as described above, then 
		\[ \alpha = | \e^{-2}  \frac{\|v-c\|}{\|c-c'\|}  (w-v)^t (Q_{c'}-Q_c)(m-c')| \le 2\left(\e\s\right)^2 + 4\e\s \]
\end{lem}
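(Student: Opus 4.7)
The plan is to bound the quadratic form $(w-v)^t(Q_{c'}-Q_c)(m-c')$ by expanding $Q_{c'}-Q_c$ in terms of the symmetric square root $M$ and then exploiting the geometric fact that $c$, $c'$, $v$ are collinear.

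First I would note that $c$ lies on $\overline{qp'}$ and $c'$ lies on $\overline{p'v}$, so both $c$ and $c'$ lie on the segment $\overline{qv}$, with $c'$ strictly between $c$ and $v$. Writing $c' = c + r(v-c)$ for some $r\in(0,1]$, every vector among $v-c$, $c-c'$, $v-c'$ is a scalar multiple of $v-c$. Consequently, applying the linear map $M_c$ preserves the ratios, and
\[
\frac{\|v-c\|}{\|c-c'\|} \;=\; \frac{\|M_c(v-c)\|}{\|M_c(c-c')\|} \;=\; \frac{1}{r}.
\]
This is the key observation: the apparently unbounded factor $\|v-c\|/\|c-c'\|$ will be cancelled exactly by a factor of $\|M_c(c-c')\|$ produced by the variation estimate below.

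Next I would expand
\[
Q_{c'}-Q_c \;=\; M_{c'}^t M_{c'} - M_c^t M_c \;=\; (M_{c'}-M_c)^t(M_{c'}-M_c) + M_c^t(M_{c'}-M_c) + (M_{c'}-M_c)^t M_c,
\]
apply Cauchy--Schwarz to each of the three terms, and use Definition~\ref{def:sigma0} in the form $\|(M_{c'}-M_c)x\|\le\s\,\|M_c(c-c')\|\,\|M_c x\|$. This yields
\[
|(w-v)^t(Q_{c'}-Q_c)(m-c')| \;\le\; \s\,\|M_c(c-c')\|\,\|M_c(w-v)\|\,\|M_c(m-c')\|\,\bigl[\s\|M_c(c-c')\| + 2\bigr].
\]
Combining with the first step, the factor $(1/r)\cdot\|M_c(c-c')\|$ simplifies to $\|M_c(v-c)\|$, so
\[
\alpha \;\le\; \e^{-2}\,\s\,\|M_c(v-c)\|\,\|M_c(w-v)\|\,\|M_c(m-c')\|\,\bigl[\s\|M_c(c-c')\| + 2\bigr].
\]

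Finally I would bound each remaining norm using the identities $D_Q^{{ }^{DW}}(v,c)=\|M_c(v-c)\|\le\e$ and $D_Q^{{ }^{DW}}(w,c)=\|M_c(w-c)\|\le\e$ (both guaranteed by the asymmetric $\e$-cover property at $c$, as noted before the lemma statement). The triangle inequality gives $\|M_c(w-v)\|\le 2\e$ and $\|M_c(c-c')\|\le\e$. For $\|M_c(m-c')\|$ I would use the collinearity once more: $\|M_c(v-c')\|=(1-r)\|M_c(v-c)\|\le(1-r)\e$ and $\|M_c(w-c')\|\le\|M_c(w-c)\|+\|M_c(c-c')\|\le(1+r)\e$, whose average is exactly $\e$, so $\|M_c(m-c')\|\le\e$. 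Substituting yields $\alpha\le 2\s\e\bigl[\s\e+2\bigr] = 2(\e\s)^2 + 4\e\s$.

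The main obstacle is the geometric bookkeeping that makes the $1/r$ blowup disappear. A naive bound would treat $\|v-c\|/\|c-c'\|$ as unbounded and the proof would fail. The collinearity of $c,c',v$ must be used twice: once to cancel the $1/r$ factor against the variation-induced $\|M_c(c-c')\|$, and once again to obtain the clean bound $\|M_c(m-c')\|\le\e$ via the cancellation of $1-r$ and $1+r$. Without either of these cancellations, the stated constants $2$ and $4$ would not be attainable.
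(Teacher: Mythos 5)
Your proof is correct and takes essentially the same approach as the paper's: your direct expansion $Q_{c'}-Q_c = (M_{c'}-M_c)^t(M_{c'}-M_c) + M_c^t(M_{c'}-M_c) + (M_{c'}-M_c)^t M_c$ is exactly the $M_c$-conjugate of the paper's identity $M_c^{-t}(Q_{c'}-Q_c)M_c^{-1} = A^tA + A + A^t$ with $A = M_{c'}M_c^{-1}-I$, and both arguments then bound the same three terms by $\s$, cancel $\|v-c\|/\|c-c'\|$ against $\|M_c(c-c')\|$ via collinearity, and finish with the norm bounds $\|M_c(w-v)\|\le 2\e$, $\|M_c(m-c')\|\le\e$. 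You make explicit the collinearity reasoning behind $\|M_c(m-c')\|\le\e$ (the $(1-r)/(1+r)$ averaging), which the paper simply labels as ``straightforward,'' but it is required for the stated constant: the naive triangle bound only gives $2\e$.
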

\begin{proof}
	Since $c$ is in the Voronoi regions of $v,w$, it is $\|M_c(c-v)\| = \|M_c(c-w)\|\le\e$ and therefore $\|M_c(w-v)\| \le \|M_c(w-c)\| + \|M_c(c-v)\| \le 2\e$. 
	Likewise, it is straightforward to show that $\|M_c(c-c')\|\le\|M_c(c-v)\|\le\e$ implies $\|M_c(m-c')\|\le\e$. 
	Therefore, it is
	\begin{eqnarray*}	
		\alpha &=& | \e^{-2}  \frac{\|v-c\|}{\|c-c'\|}  (w-v)^t M_c^t \left[ M_c^{-t}(Q_{c'}-Q_c) M_c^{-1} \right] M_c (m-c')| \\
				&\le& | \e^{-2}  \frac{\|v-c\|}{\|c-c'\|}  \|M_c(w-v)\| \rho\left( M_c^{-t}(Q_{c'}-Q_c) M_c^{-1} \right) \|M_c (m-c')\| \\
				&\le& 2 \frac{\|v-c\|}{\|c-c'\|} \rho( M_c^{-t}(Q_{c'}-Q_c) M_c^{-1} )
	\end{eqnarray*}
	
	Letting $A=M_{c'}M_c^{-1}-I$, it is $M_c^{-t}(Q_{c'}-Q_c) M_c^{-1} = A^t A + A + A^t$, and so
		$\rho( M_c^{-t}(Q_{c'}-Q_c) M_c^{-1}) = \rho(A)^2 + 2\rho(A)$. 
	By Lemma~\ref{Vlm:cover0}, 
		it is 
	\[ \rho(A) = \rho\left((M_{c'} - M_c)M_c^{-1}\right) \le \s \|M_c(c-c')\| \]
	and, using $\|M_c(c-c')\|\le\|M_c(c-v)\|\le\e$, it is
	\begin{eqnarray*}	
		\frac{\|v-c\|}{\|c-c'\|} \rho( M_c^{-t}(Q_{c'}-Q_c) M_c^{-1} ) &\le& \frac{\|v-c\|}{\|c-c'\|} \left[ \rho(A)^2 + 2\rho(A) \right] \\
				&\le& \frac{\|v-c\|}{\|c-c'\|} \left[ \s^2 \|M_c(c-c')\|^2 + 2\s \|M_c(c-c')\| \right] \\
				&\le&  \left[ \s^2 \|M_c(c-v)\|^2 + 2\s \|M_c(c-v)\| \right] \\
				&\le&  \left(\e\s\right)^2 + 2\e\s
	\end{eqnarray*}
	which in turn implies $\alpha \le 2\left(\e\s\right)^2 + 4\e\s$.
\end{proof}
\vspace*{0.1in}

%

%

\begin{lem}\label{Vlm:beta0}
	Given an asymmetric $\e$-net $V$,  if $v,w\in V$ are Voronoi-neighbors, and it is $q\in R^{{ }^{DW}}_v$ and $c$ as described above, then it is
		\[ \beta  =   |\e^{-2} (c-v)^t Q_c (w-v)| \ge 1/\left( 2 (1+\e\s)^2 \right) \]
\end{lem}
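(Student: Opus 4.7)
The plan is to work in the coordinate system in which $M_c$ is absorbed, writing $\beta = \e^{-2}|(M_c(c-v))^t M_c(w-v)|$. The starting observation is that $c$ lies on the bisector $D_Q^{{}^{DW}}(v,\cdot) = D_Q^{{}^{DW}}(w,\cdot)$, which after applying $M_c$ becomes the ordinary Euclidean perpendicular bisector of $M_c v$ and $M_c w$. Expanding the identity $\|M_c(c-v)\|^2 = \|M_c(c-w)\|^2 = \|M_c(c-v)+M_c(v-w)\|^2$ and cancelling yields the clean geometric identity
\[
  (M_c(c-v))^t M_c(w-v) \;=\; \tfrac{1}{2}\|M_c(w-v)\|^2,
\]
so $\beta = \tfrac{1}{2}\e^{-2}\|M_c(w-v)\|^2$, and the problem reduces to proving the lower bound $\|M_c(w-v)\| \ge \e/(1+\e\s)$.

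Next I would invoke the asymmetric $\e$-packing property of $V$. Since $v,w$ are distinct Voronoi neighbors, either $D_Q^{{}^{DW}}(v,w) = \|M_w(v-w)\| > \e$ or $D_Q^{{}^{DW}}(w,v) = \|M_v(w-v)\| > \e$. The two cases are symmetric, so assume the second. To transfer this bound from the metric at $v$ to the metric at $c$, I would apply Lemma~\ref{Vlm:cover0} to the pair $(a,b) = (c,v)$, which is legal because $\|M_c(c-v)\|\le\e$ (from $c$ being in $R^{{}^{DW}}_v$ with radius at most $\e$ by the cover property). This gives $\rho(M_v M_c^{-1}) \le 1 + \e\s$, and hence $\rho_m(M_c M_v^{-1}) = 1/\rho(M_v M_c^{-1}) \ge 1/(1+\e\s)$.

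Writing $M_c(w-v) = (M_c M_v^{-1}) M_v(w-v)$ and taking norms then yields
\[
  \|M_c(w-v)\| \;\ge\; \rho_m(M_c M_v^{-1})\,\|M_v(w-v)\| \;>\; \frac{\e}{1+\e\s},
\]
from which $\beta \ge 1/(2(1+\e\s)^2)$ follows immediately. In the other packing case, the same argument is run with $(a,b) = (c,w)$, using $\|M_c(c-w)\| \le \e$ to obtain the analogous bound on $\rho_m(M_c M_w^{-1})$.

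The one point that requires a bit of care is the first step: recognizing that the bisector condition gives an \emph{equality} (not just an inequality) for $(M_c(c-v))^t M_c(w-v)$, turning what looks like a hard lower bound on an inner product between two essentially unrelated vectors into a bound purely on $\|M_c(w-v)\|$. Once that identity is in hand, everything else is a single application of the packing property plus a routine transfer of norms via Lemma~\ref{Vlm:cover0}, with the factor $1+\e\s$ accounting exactly for the $\e\s$-distortion in passing from $M_v$ (or $M_w$) to $M_c$.
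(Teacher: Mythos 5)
Your proof is correct and takes essentially the same approach as the paper's. The bisector identity you derive in $M_c$-whitened coordinates, $(M_c(c-v))^t M_c(w-v) = \tfrac12\|M_c(w-v)\|^2$, is the same as the paper's $(w-v)^t Q_c(v+w-2c)=0$ rearranged, and your final step (packing property gives $\|M_v(w-v)\|>\e$ or $\|M_w(v-w)\|>\e$, then transfer to $M_c$ via Lemma~\ref{Vlm:cover0} and $\rho_m(A)^{-1}=\rho(A^{-1})$) is exactly the paper's Lemma~\ref{lem:1}, which you have simply reproved inline rather than citing.
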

\begin{proof}
	We first show that $(w-v)^t Q_c (v+w - 2c) = 0$. 	
	Given that $D_Q^{{ }^{DW}}(v,c) = D_Q^{{ }^{DW}}(w,c)$ implies $(v-c)^t Q_c (v-c) = (w-c)^t Q_c (w-c)$, it is
	\begin{eqnarray*}
		(w-v)^t Q_c (v+w - 2c) &=& \left[(w-c) + (c-v)\right]^t Q_c \left[(w-c) + (v-c)\right] \\ 
							&=& \left[(w-c)^t Q_c (w-c) - (v-c)^t Q_c (v-c)\right] \\
							&+& \left[ (w-c)^t Q_c (v-c) + (c-v)^t Q_c (w-c) \right] = 0
	\end{eqnarray*}
	By Lemma~\ref{lem:1} it is
	\begin{eqnarray*}
		\frac{\e^2}{(1+\e\s)^2} &\le& |(w-v)^t Q_c (v-w) |\\
				& =& |2(w-v) Q_c (c-v) + (w-v)^t Q_c (v+w - 2c)| \\
				&=& 2|(w-v)^t Q_c(c-v)|
	\end{eqnarray*}
	and thus $|\e^{-2} (c-v)^t Q_c (w-v)| \ge 1/\left( 2 (1+\e\s)^2 \right)$. 
\end{proof}
\vspace*{0.1in}

The next theorem uses the bounds of Lemmas~\ref{Vlm:alpha0} and~\ref{Vlm:beta0} to prove that, 
	under certain circumstances, the difference $\phi(c)-\phi(c')$ cannot vanish and therefore the anisotropic Voronoi diagram 
	is orphan-free. \\
	
\begin{thm}\label{Vth:noorphans0}
	Given a continuous metric $Q$, 
		the Du/Wang diagram of an asymmetric $\e$-net (with respect to $D_Q^{{ }^{DW}}$) is orphan free if $\e\s \le 0.09868$. 
\end{thm}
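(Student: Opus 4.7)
The bulk of the work has already been done in the setup and in Lemmas~\ref{Vlm:alpha0} and~\ref{Vlm:beta0}; what remains is to combine them and verify a one-variable numerical inequality. The plan is to argue by contradiction: assume some $q \in R^{{ }^{DW}}_v$ lies in a connected component of $R^{{ }^{DW}}_v$ that does not contain $v$. As described in the paragraphs preceding the theorem, the intermediate value theorem applied to the continuous function $\phi$ along the segment $\overline{qv}$ then produces two points $c, c' \in \overline{qv}$ with $\phi(c) = \phi(c') = 0$.

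Plugging these into the telescoping identity displayed just above Lemma~\ref{Vlm:alpha0} and applying the triangle inequality yields
\[
0 \;=\; \frac{\|v-c\|}{\|c-c'\|}\bigl(\phi(c) - \phi(c')\bigr) \;\ge\; \beta - \alpha,
\]
so any contradiction must come from showing $\beta > \alpha$. Substituting the upper bound $\alpha \le 2(\e\s)^2 + 4\e\s$ from Lemma~\ref{Vlm:alpha0} and the lower bound $\beta \ge 1/\bigl(2(1+\e\s)^2\bigr)$ from Lemma~\ref{Vlm:beta0}, it suffices to establish
\[
\frac{1}{2(1+\e\s)^2} \;>\; 2(\e\s)^2 + 4\e\s,
\]
which, after clearing denominators and setting $x = \e\s$, is equivalent to $4x(x+2)(1+x)^2 < 1$, i.e.\ $g(x) := 4x^4 + 16x^3 + 20x^2 + 8x - 1 < 0$.

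The remaining step is to verify this scalar inequality for $x \le 0.09868$. Since all coefficients of $g'(x)$ are positive on $[0,\infty)$, $g$ is strictly monotone increasing with $g(0) = -1$, and a direct numerical evaluation gives $g(0.09868) < 0$. Hence $g(\e\s) < 0$ for every $\e\s \le 0.09868$, producing the required contradiction and completing the proof. I expect the main (and essentially only) obstacle to be this scalar calibration: the structural part of the argument is just assembly of the two preceding lemmas, while the explicit constant $0.09868$ is chosen precisely to be the largest value for which the polynomial inequality holds, so it is what I would devote the numerical care to.
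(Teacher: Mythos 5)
Your proof is correct and follows the paper's approach exactly: contradiction via the intermediate value theorem, assembly of Lemmas~\ref{Vlm:alpha0} and~\ref{Vlm:beta0}, and a scalar check that $\beta - \alpha > 0$ at the stated threshold. Your polynomial reduction $g(x) = 4x^4 + 16x^3 + 20x^2 + 8x - 1 < 0$ is a slightly more careful rendering of the final inequality than the paper (which, incidentally, has a misplaced sign/parenthesis in its display $1/(2(1+\e\s)^2) - 2(\e\s)^2 + 4\e\s > 0$, clearly intending $1/(2(1+\e\s)^2) - (2(\e\s)^2 + 4\e\s) > 0$ as you correctly read it), and your monotonicity observation confirms that $0.09868$ is essentially the largest valid threshold.
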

\begin{proof}
Given the construction at the beginning of this section, it must be $\phi(c)-\phi(c')=0$. However, if $\e\s \le 0.09868$, it is
\[  \frac{\|v-c\|}{\|c-c'\|} |\phi(c) - \phi(c')| \ge \beta - \alpha \ge 1/\left( 2 (1+\e\s)^2 \right) - 2\left(\e\s\right)^2 + 4\e\s > 0 \]
reaching a contradiction. 

Since all points in $\overline{qv}$ must be closer to $v$ than to any $w\ne v$, $q$ cannot be in an orphan region of $v$. 
Additionally, since for every point $q\in R^{{ }^{DW}}_v$, the segment $\overline{qv}$ is also 
	in $R^{{ }^{DW}}_v$, every Voronoi region is star-shaped with respect to its generating site. 
\end{proof}
\vspace*{0.1in}

Finally, note that connectedness of Voronoi regions is shown by proving the stronger condition of star-shapedeness, which suggests that the condition $\e\s \le 0.09868$
	may be conservative in some cases. 

\subsection{Orphan-free Labelle/Shewchuk diagrams}\label{Vsec:LSnoorphans0}


In~\cite{LS}, a condition is described for a two-dimensional LS diagram to be orphan-free. 
Although this condition is somewhat technical, an accompanying iterative-insertion algorithm is provided that, 
	given enough time, will output an orphan-free LS diagram. 
In this section we describe conditions for an LS diagram to be orphan-free in any number of dimensions. 
The conditions are very similar to those of Section~\ref{Vsec:noorphans0}, namely, that the set of generating sites form 
	an asymmetric $\e$-net with respect to $D_Q^{{ }^{LS}}$, with sufficiently small $\e$. 
The net requirement is somewhat natural in the sense that it implies that the sites are ``uniformly distributed" with respect to $D_Q^{{ }^{LS}}$.

Similarly as in Section~\ref{Vsec:noorphans0}, we consider a point $q$ that is in an orphan region of some $v\in V$.
Since $q$ is in an orphan region of $v$, the segment $\overline{qv}$ cannot be contained in $R^{{ }^{LS}}_v$, 
	and thus there must be $p'\in\overline{qv}$ that belongs to the Voronoi region of some $w\ne v$. 
This in turn implies that there are two distinct points $c\in\overline{qp'}$, $c'\in\overline{p'v}$ that are equidistant from $v,w$. 
If we  define the function
\[	\phi(p(\l)) = \e^{-2}\left[D_Q^{{ }^{LS}}(w,p(\l)) - D_Q^{{ }^{LS}}(v,p(\l))\right]  \]
with $p(\l) = v (1-\l) + c \l$, then it must be $\phi(c)=\phi(c')=0$. 


%
We now prove that $q$ cannot be in an orphan region by showing that $\phi(c)-\phi(c')$ cannot vanish, reaching a contradiction. 

It is
\begin{eqnarray*}
	0&=&\frac{\|v-c\|}{\|c-c'\|} |\phi(c) - \phi(c')| = \e^{-2}  \frac{\|v-c\|}{\|c-c'\|} | (c-c')^t Q_w (c'-w) - (c-c')^t Q_v (c'-v) | \\
					&=& | \e^{-2} (c-v)^t Q_w (v-w) + \e^{-2} (c-v)^t (Q_w - Q_v) (c'-v) |
\end{eqnarray*}
which, letting $\alpha = | \e^{-2} (c-v)^t (Q_w - Q_v) (c'-v)  |$, 
and
$ \beta(\l) = |\e^{-2} (c-v)^t Q_w (v-w)  | $, can be rewritten, 
similarly as in Section~\ref{Vsec:noorphans0}, as
\[ \frac{\|v-c\|}{\|c-c'\|} |\phi(c) - \phi(c')| \ge \beta - \alpha \]
We now prove upper, and lower bounds for $\alpha$ and $\beta$, respectively. 
\\

\begin{lem}
	Given an asymmetric $\e$-net $V$,  if $v,w\in V$ are Voronoi-neighbors, 
		and $q\in R^{{ }^{LS}}_v$, and $p(\l)$ is as described above, then it is
		\[ \alpha = | \e^{-2} (c-v)^t (Q_w - Q_v) (c'-v)  | \le \g^2 + 2\g \]
	where $\g = \e\s (1+k)$, and $k = (1+\e\s) / (1-\e\s)$. 
\end{lem}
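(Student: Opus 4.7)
The target bound on $\alpha$ mirrors the structure of Lemma~\ref{Vlm:alpha0}, but with factorization through $M_v$ rather than $M_c$, since the difference of interest here is $Q_w - Q_v$ (evaluated at sites) rather than $Q_{c'} - Q_c$ (evaluated at interior points). The plan is to insert $M_v^{-t}M_v^t$ and $M_v M_v^{-1}$ around the $Q_w-Q_v$ factor and apply sub-multiplicativity of the spectral norm to obtain
\[
\alpha \le \e^{-2}\,\|M_v(c-v)\|\;\rho\bigl(M_v^{-t}(Q_w-Q_v)M_v^{-1}\bigr)\;\|M_v(c'-v)\|,
\]
and then bound each of the three factors separately.

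The two outer factors are handled quickly. Since $c$ lies on the shared $v$–$w$ Voronoi boundary, the $\e$-cover property gives $\|M_v(v-c)\|=\|M_w(w-c)\|\le\e$; and since $c'\in\overline{p'v}\subset\overline{cv}$, also $\|M_v(c'-v)\|\le\|M_v(c-v)\|\le\e$. For the middle factor, setting $A = M_w M_v^{-1}-I$ and expanding yields $M_v^{-t}(Q_w-Q_v)M_v^{-1} = A^tA + A + A^t$, so $\rho(\cdot)\le\rho(A)^2+2\rho(A)$, and Definition~\ref{def:sigma0} gives $\rho(A)\le\s\|M_v(v-w)\|$. Everything thus reduces to bounding $\|M_v(v-w)\|$.

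This last step is the main obstacle. Unlike the DW case of Lemma~\ref{Vlm:alpha0}, where the two points being compared ($c$ and $c'$) are automatically within $\e$ of each other in the relevant metric, here we must compare two \emph{sites} $v$ and $w$ whose LS distance is not directly controlled by the asymmetric net property. The plan is to route through $c$ via the triangle inequality,
\[
\|M_v(v-w)\| \le \|M_v(v-c)\| + \|M_v(c-w)\| \le \e + \rho(M_v M_w^{-1})\,\|M_w(c-w)\| \le \e + \rho(M_v M_w^{-1})\,\e,
\]
and then bound $\rho(M_v M_w^{-1})$ by factoring it as $(M_v M_c^{-1})(M_c M_w^{-1})$ and applying Lemma~\ref{Vlm:cover0} at $c$ to each factor (legitimate because $\|M_v(v-c)\|,\|M_w(w-c)\|\le\e$). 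This yields $\rho(M_c M_w^{-1})\le 1+\e\s$ and $\rho(M_v M_c^{-1})\le 1/(1-\e\s)$, hence $\rho(M_v M_w^{-1})\le k$.

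Combining gives $\|M_v(v-w)\|\le(1+k)\e$, so $\rho(A)\le\e\s(1+k)=\g$, and back-substitution yields $\alpha\le\g^2+2\g$. The two-step chaining through $c$ is precisely what produces the extra factor $k=(1+\e\s)/(1-\e\s)$ that distinguishes this LS estimate from the simpler DW bound in Lemma~\ref{Vlm:alpha0}.
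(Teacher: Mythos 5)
Your proof is correct and follows the same route as the paper: insert $M_v^{-t}M_v^t$ and $M_v M_v^{-1}$ around $Q_w-Q_v$, peel off the outer factors $\|M_v(c-v)\|\le\e$ and $\|M_v(c'-v)\|\le\e$, and reduce the middle factor to $\rho(A)^2+2\rho(A)$ with $A=M_wM_v^{-1}-I$, then bound $\rho(A)\le\s\|M_v(v-w)\|\le\e\s(1+k)$ by chaining through $c$. The only surface difference is that the paper delegates the last two reductions to its Appendix Lemmas~\ref{lem:4}, \ref{lem:23}, and~\ref{lem:v2w}, whereas you inline them; your substeps match those lemmas' proofs almost verbatim (your direct factorization $\rho(M_vM_w^{-1})\le\rho(M_vM_c^{-1})\rho(M_cM_w^{-1})$ is trivially equivalent to the paper's use of $\rho_m(M_wM_v^{-1})^{-1}$ via Lemma~\ref{lem:rhom}).
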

\begin{proof}
	Since $\|M_v (c'-v)\| \le \|M_v(c-v)\|\le\e$, by Lemma~\ref{lem:4}, it is
	\begin{eqnarray*}
		\alpha&=& | \e^{-2}  (c-v)^t M_v^t M_v^{-t} (Q_w - Q_v) M_v^{-1} M_v (c'-v)  | \\ 
				&\le& \e^{-2} \|M_v(c-v)\| \|M_v(c'-v)\| \rho(M_v^{-t} (Q_w - Q_v) M_v^{-1}) \\
				&\le&  \e^{-2} \|M_v(c-v)\|^2 \rho(M_v^{-t} (Q_w - Q_v) M_v^{-1}) \\
				&\le& \left(\e\s (1+k)\right)^2 + 2 \e\s (1+k) = \g^2 + 2\g
	\end{eqnarray*}
\end{proof}
\vspace*{0.1in}

\begin{lem}
	Given an asymmetric $\e$-net $V$,  if $v,w\in V$ are Voronoi-neighbors, 
		$q\in R^{{ }^{LS}}_v$, and $p(\l)$ is as described at the beginning of Sec.~\ref{Vsec:LSnoorphans0}, then it is
		\[ \beta = | \e^{-2} (c-v)^t Q_w (v-w)  | \ge (k^2 - \g^2 - 2\g)/2 \]
	where $\g = \e\s (1+k)$, and $k = (1+\e\s) / (1-\e\s)$
\end{lem}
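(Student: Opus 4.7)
The plan is to parallel the DW proof of Lemma~\ref{Vlm:beta0} but with extra care, because the LS equidistance condition $\|M_v(c-v)\|=\|M_w(c-w)\|$ involves \emph{different} matrices at $v$ and $w$, so the clean cancellation $(w-v)^t Q_c(v+w-2c)=0$ used in the DW case is not available. Instead I will split $(c-v)^t Q_w(v-w)$ into a ``dominant'' piece controlled by the asymmetric $\e$-packing and a ``perturbation'' piece controlled by the $\e$-cover together with $\s$.

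The key identity comes from applying polarization with respect to $Q_w$ to the decomposition $c-w=(c-v)+(v-w)$:
\[
  \|M_w(c-w)\|^2 \;=\; \|M_w(c-v)\|^2 \;+\; 2(c-v)^t Q_w(v-w) \;+\; \|M_w(v-w)\|^2.
\]
The LS equidistance then replaces the left-hand side by $\|M_v(c-v)\|^2$, leaving
\[
  2(c-v)^t Q_w(v-w) \;=\; (c-v)^t(Q_v-Q_w)(c-v) \;-\; \|M_w(v-w)\|^2.
\]
The first term on the right absorbs the ``error'' of using $Q_w$ rather than a common midpoint matrix, while the second is what I expect to dominate. Taking absolute values and using the reverse triangle inequality reduces the lemma to upper-bounding the first term and lower-bounding the second.

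For the perturbation, set $A=M_w M_v^{-1}-I$, so $M_v^{-t}(Q_w-Q_v)M_v^{-1}=A^tA+A+A^t$ has spectral norm at most $\rho(A)^2+2\rho(A)$, exactly as in Lemma~\ref{Vlm:alpha0}. Because $c$ lies within $M_v$-distance $\e$ of $v$ and within $M_w$-distance $\e$ of $w$ (both consequences of the $\e$-cover property at the Voronoi boundary point $c$), two applications of Lemma~\ref{Vlm:cover0}, chained through $M_c$, yield $\rho(A)\le\e\s(1+k)=\g$, and combined with $\|M_v(c-v)\|\le\e$ this gives $|(c-v)^t(Q_v-Q_w)(c-v)|\le(\g^2+2\g)\e^2$. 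For the dominant term I would use the asymmetric $\e$-packing, which ensures $\|M_v(v-w)\|>\e$ or $\|M_w(v-w)\|>\e$, together with the pointwise comparison $\|M_v r\|/\|M_w r\|\in[1/k,k]$ (obtained again from Lemma~\ref{Vlm:cover0} applied at $(v,c)$ and at $(w,c)$ and composed), to extract the lower bound $\|M_w(v-w)\|^2\ge k^2\e^2$. Combining, $|2(c-v)^t Q_w(v-w)|\ge k^2\e^2-(\g^2+2\g)\e^2$, and dividing by $2\e^2$ yields the claim. The hardest step is the last one: since the packing only guarantees one of the two directional distances exceeds $\e$, one has to combine that single inequality with the asymmetry factor $k$ (and with the fact that $c$ is a shared Voronoi boundary point of $v$ and $w$) in the right way to produce a lower bound uniform over which case of the packing is active.
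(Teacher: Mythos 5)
Your decomposition is algebraically the same as the paper's: polarizing $\|M_w(c-w)\|^2$ along $c-w=(c-v)+(v-w)$ and substituting the LS equidistance $\|M_w(c-w)\|=\|M_v(c-v)\|$ is, after rearrangement, identical to the paper's splitting of $(v-w)^t Q_w(v-w)$ into $2(w-v)^t Q_w(c-v)+(w-v)^t Q_w(v+w-2c)$ followed by cancellation of $(w-c)^t Q_w(w-c)-(v-c)^t Q_v(v-c)$. Your bound on the perturbation term, $|(c-v)^t(Q_v-Q_w)(c-v)|\le(\g^2+2\g)\e^2$, is also the paper's (Lemma~\ref{lem:4}); the only quibble is that $\rho(M_wM_v^{-1}-I)\le\g$ comes from applying Definition~\ref{def:sigma0} directly at the pair $(v,w)$ together with the upper bound $\|M_v(v-w)\|\le\e(1+k)$ of Lemma~\ref{lem:23}, not from ``two applications of Lemma~\ref{Vlm:cover0} chained through $M_c$.''

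The genuine gap is in the dominant term. You assert $\|M_w(v-w)\|^2\ge k^2\e^2$, i.e.\ $\|M_w(v-w)\|\ge k\e$, but the asymmetric packing property gives only one of $\|M_v(v-w)\|>\e$ or $\|M_w(v-w)\|>\e$, and the two-sided comparison $\|M_v r\|/\|M_w r\|\in[1/k,k]$ can at best transfer a lower bound of $\e$ from one norm to a lower bound of $\e/k$ on the other; it cannot amplify $\e$ to $k\e$ (since $k>1$). The uniform conclusion available in both cases of the packing disjunction is $\|M_w(v-w)\|\ge\e/k$, hence $\|M_w(v-w)\|^2\ge\e^2/k^2$, which is exactly what Lemma~\ref{lem:23} delivers. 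Carrying that through, one obtains $\beta\ge(k^{-2}-\g^2-2\g)/2$, not $(k^2-\g^2-2\g)/2$. Note in fact that $k^2-\g^2-2\g\equiv1$ identically in $\e\s$, so the stated bound is constant at $1/2$ and would permit $\e\s$ up to about $0.101$, whereas the threshold $\e\s\le0.0584$ in Theorem~\ref{Vth:LSnoorphans0} is exactly where $(k^{-2}-\g^2-2\g)/2=\g^2+2\g$; the statement's $k^2$ is a typo for $k^{-2}$. You flagged this step yourself as the hardest one — the resolution is that the lower bound you were trying to produce is not attainable, and the correct one is weaker by a factor of $k^4$.
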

\begin{proof}
	Because $D_Q^{{ }^{LS}}(v,c) = D_Q^{{ }^{LS}}(w,c)$, by Lemma~\ref{lem:4}, it is
	\begin{eqnarray*}
		|(w-v)^t Q_w (v+w-2c) | &=& |(w-c)^t Q_w (w-c) - (v-c)^t Q_v (v-c) \\
							&-& (v-c)^t (Q_w-Q_v) (v-c)| \\
						&=& |(v-c)^t (Q_w-Q_v) (v-c)| \le \g^2+2\g
	\end{eqnarray*}

	By Lemma~\ref{lem:23} it is
	\begin{eqnarray*}
		\e^2/k^2 &\le& |(v-w)^t Q_w (v-w)| = |2 (w-v)^t Q_w (c-v) + (w-v)^t Q_w (v+w-2c)| \\
				&\le&  2|(w-v)^t Q_w(c-v)| + |(w-v)^t Q_w (v+w-2c)| \\
				&\le&  2|(w-v)^t Q_w(c-v)| + \g^2+2\g
	\end{eqnarray*}
	and therefore $\beta = | \e^{-2} (c-v)^t Q_w (v-w)  | \ge (k^2 - \g^2 - 2\g)/2$, as claimed. 
\end{proof}
\vspace*{0.1in}

These bounds on $\alpha$ and $\beta$ imply the following:\\

\begin{thm}\label{Vth:LSnoorphans0}
	Given a continuous metric $Q$, 
	the Labelle/Shewchuk diagram of an asymmetric $\e$-net (with respect to $D_Q^{{ }^{LS}}$) is orphan free if $\e\s \le 0.0584$. 
\end{thm}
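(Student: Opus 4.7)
The plan is to follow the strategy of Theorem~\ref{Vth:noorphans0} essentially verbatim: assume for contradiction that some $q\in\O$ lies in an orphan region of a site $v\in V$, so that the construction already carried out in Section~\ref{Vsec:LSnoorphans0} produces two distinct points $c,c'\in\overline{qv}$ at which $\phi$ vanishes. The algebraic identity derived just before the two preceding lemmas then gives
\[ \frac{\|v-c\|}{\|c-c'\|} \, |\phi(c) - \phi(c')| \;\ge\; \beta - \alpha, \]
and because the left-hand side is zero, the entire proof reduces to establishing $\beta > \alpha$.

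First I would simply concatenate the upper bound on $\alpha$ and the lower bound on $\beta$ from the two preceding lemmas to obtain
\[ \beta - \alpha \;\ge\; \frac{k^2 - \g^2 - 2\g}{2} - (\g^2 + 2\g) \;=\; \frac{k^2 - 3\g^2 - 6\g}{2}, \]
with $\g = \e\s(1+k)$ and $k = (1+\e\s)/(1-\e\s)$. Thus the theorem reduces to the single scalar inequality $k^2 > 3\g^2 + 6\g$ for $\e\s \le 0.0584$.

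The only remaining step, and the main obstacle (minor though it is), is a numerical verification. Writing $t := \e\s$, both $k(t)$ and $\g(t)$ are smooth and monotone increasing on $[0,1)$ with $k(0)=1$ and $\g(0)=0$, so the function $f(t) := k(t)^2 - 3\g(t)^2 - 6\g(t)$ is smooth, decreasing in a neighborhood of the origin, and satisfies $f(0)=1$. It therefore suffices to check the inequality at the endpoint $t = 0.0584$, where direct substitution gives $k \approx 1.124$, $\g \approx 0.124$, and hence $k^2 \approx 1.263 > 0.790 \approx 3\g^2 + 6\g$. This strict positivity contradicts $\phi(c)=\phi(c')=0$ and rules out the assumed orphan region. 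As in Theorem~\ref{Vth:noorphans0}, the same argument in fact shows the stronger statement $\overline{qv} \subset R^{{ }^{LS}}_v$ whenever $q \in R^{{ }^{LS}}_v$, so every LS Voronoi region is star-shaped with respect to its generating site.
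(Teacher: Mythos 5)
Your proposal is correct and matches the paper's proof essentially verbatim: combine the $\alpha$ and $\beta$ bounds from the two preceding lemmas to get $\beta-\alpha\ge (k^2-3\gamma^2-6\gamma)/2$, check this is positive at $\e\s\le 0.0584$, and conclude star-shapedness. The only (cosmetic) looseness is the phrase ``decreasing in a neighborhood of the origin, therefore it suffices to check the endpoint,'' which as stated does not quite justify positivity on the whole interval; however, $f(t)=k^2-3\gamma^2-6\gamma$ is in fact monotone decreasing on $[0,0.0584]$ (one can verify $(k^2)'\le \frac{4(1+t)}{(1-t)^3}<12\le\bigl(3\gamma^2+6\gamma\bigr)'$ there), so the conclusion holds and the paper itself is no more explicit on this point.
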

\begin{proof}

Given the construction at the beginning of this section, it must be $\phi(c)-\phi(c')=0$. However, if $\e\s \le 0.0584$, letting $\g = \e\s (1+k)$, and $k = (1+\e\s) / (1-\e\s)$, it is
	\[ \frac{\|v-c\|}{\|c-c'\|} |\phi(c) - \phi(c')| \ge \beta - \alpha \ge (k^2 - \g^2 - 2\g)/2 - \g^2-2\g > 0 \]
reaching a contradiction. 

Since all points in $\overline{qv}$ must be closer to $v$ than to any $w\ne v$, $q$ cannot be in an orphan region of $v$. 
Additionally, since for every point $q\in R^{{ }^{LS}}_v$, the segment $\overline{qv}$ is also 
	in $R^{{ }^{LS}}_v$, every Voronoi region is star-shaped with respect to its generating site. 
\end{proof}
\vspace*{0.1in}

\section{Conclusion}

This paper presents a simple and natural condition for the two definitions of anisotropic diagrams of~\cite{LS} and~\cite{DW} to be 
	composed of connected (and in particular star-shaped) regions. 
The condition is simply that the generating sites be roughly ``uniformly" distributed (forming an asymmetric $\e$-net), with respect to the underlying metric. 
Apart from being natural, this condition is also commonly employed for certain practical problems. 
In particular, for optimal quantization, where we are interested in the primal Voronoi diagram, the optimal quantization sets form an $\epsilon$-net~\cite{Gruber,enets}. 
For $\mathcal{L}^\infty$ PL approximation of functions~\cite{enets}, where we are interested in the dual simplicial complex, 
	existing asymptotically-optimal constructions use vertex sets that form an $\epsilon$-net~\cite{enets}.

Note that, although the definition of $\epsilon$-net must be slightly modified for the problem that concerns us here, 
	this modification is not of great practical importance since existing algorithms for computing $\epsilon$-nets~\cite{Gonz} 
	are easily adapted to produce the desired (asymmetric) $\epsilon$-net. 
	
As mentioned in Sec.~\ref{Vsec:setup}, computing an $\e$-net using the algorithm of~\cite{Gonz} involves, at each iteration, 
	computing the farthest point from the current set of sites: a task equivalent in cost to computing the Voronoi diagram of each intermediate set of sites. 
This makes explicitly computing $\s$ unnecessary since computing an asymmetric $\e$-net of sufficiently small $\e$ has a similar cost to 
	iteratively running the algorithm of~\cite{Gonz} while checking whether intermediate diagrams are orphan-free, 
	and then stopping when the first orphan-free diagram is produced. 
As mentioned earlier, the proofs in this paper guarantee that such an iterative algorithm will stop, while the specific bounds may give some 
	indication of when this happens. 

In the eventuality that ways to compute (asymmetric) $\e$-nets arise that are more efficient, 
it may be that computing $\s$ becomes useful since, along with Theorems~\ref{Vth:noorphans0} and~\ref{Vth:LSnoorphans0},
	it would provide a simple lower bound on the largest $\e$ for which an asymmetric $\e$-net is guaranteed to result in an orphan-free diagram. 
In this case it would become useful to know of more efficient ways to compute $\s$. 
In particular, in~\cite{techreport} [this reference is to a supplementary document included in the submission, which will be cited as a Technical Report in the final version], 
	we show that if the metric $Q$ is continuously differentiable, 
	more efficient ways to bound $\s$ exist, which involve looking at every point in the domain only once, as opposed to comparing all pairs of points (as in Def.~\ref{def:sigma0}). 
	In particular, if the metric is specified as a PL function over a simplicial complex, then bounding $\s$ only requires computing a single number at each element (in constant time), 
		and then taking the maximum over all elements, and has therefore linear complexity in the number of elements in the complex.

%

Finally, note that, 
aside from the well-behaved-ness implied by the lack of orphan regions in the Voronoi diagrams, 
	it is possible that orphan-freeness may be useful in guaranteeing properties of their dual abstract simplicial complexes such 
	as being absent of inverted elements. 
In future work, we would like to explore  whether orphan-freeness in a Voronoi diagram can be used, possibly along with additional conditions, to guarantee that it's dual is an embedded simplicial complex.

\bibliographystyle{plain}
\bibliography{avd}

\section*{Appendix} 

Assume given $\e>0$ and a metric $Q\in\mathcal{C}^0$ defined over domain $\O$, and let $k\equiv (1+\e\s) / (1-\e\s) > 1$. 
The following lemmas are used in the proofs of Section~\ref{partI}.


\begin{lem}\label{lem:rhom}
	Given a non-singular matrix $A\in\mathbb{R}^{n\times n}$, it is $\rho(A^{-1}) = \rho_m(A)^{-1}$. 
\end{lem}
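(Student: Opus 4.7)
The plan is to prove the identity by a change of variables that exploits the bijection $r \mapsto As$ on $\mathbb{R}^n\setminus\{0\}$ induced by the non-singularity of $A$. Concretely, I would start from the definition
\[
\rho(A^{-1}) = \sup_{r\ne 0} \frac{\|A^{-1}r\|}{\|r\|}
\]
and substitute $r = As$ for $s \in \mathbb{R}^n\setminus\{0\}$. Since $A$ is invertible, this substitution is a bijection of $\mathbb{R}^n\setminus\{0\}$ with itself, so the supremum over $r\ne 0$ is identical to the supremum over $s\ne 0$.

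Under this substitution, $A^{-1}r = s$ and $r = As$, so the ratio becomes $\|s\|/\|As\|$. Thus
\[
\rho(A^{-1}) = \sup_{s\ne 0} \frac{\|s\|}{\|As\|} = \left(\inf_{s\ne 0} \frac{\|As\|}{\|s\|}\right)^{-1} = \rho_m(A)^{-1},
\]
using the elementary fact that for a positive-valued function $f$, $\sup (1/f) = 1/\inf f$. To apply this step rigorously I would briefly note that $\rho_m(A) > 0$ because $A$ is non-singular (if $\rho_m(A) = 0$, there would be a sequence of unit vectors with $\|As\|\to 0$, and by compactness of the unit sphere a limit $s^*$ with $As^* = 0$, contradicting invertibility), so the reciprocal is well defined.

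There is essentially no serious obstacle here; the only point that requires any care is making sure the change of variables is justified as a bijection on the punctured space and that $\rho_m(A)$ is strictly positive so the reciprocal makes sense. Both follow immediately from the non-singularity hypothesis, so the proof should fit comfortably in a few lines.
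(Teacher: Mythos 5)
Your proof is correct, and it takes a genuinely different route from the paper's. The paper's proof works through eigenvalues: it writes $\rho(A^{-1}) = \max_i |\lambda_i^{-1}| = (\min_i|\lambda_i|)^{-1} = \rho_m(A)^{-1}$, treating $\rho$ and $\rho_m$ as the largest and smallest absolute eigenvalues of $A$. You instead work directly from the variational definitions $\rho(A) = \sup_{r\neq 0}\|Ar\|/\|r\|$ and $\rho_m(A) = \inf_{r\neq 0}\|Ar\|/\|r\|$ (i.e.\ the extreme singular values), using the change of variables $r = As$.

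This difference is not merely cosmetic. The two characterizations of $\rho$ and $\rho_m$ coincide only for normal matrices; for a general non-singular $A$, the extreme absolute eigenvalues and the extreme singular values need not agree. The paper's surrounding text explicitly gives the $\sup/\inf$ characterization, and the lemma is subsequently applied to matrices of the form $M_c M_v^{-1}$ (products of two distinct symmetric positive-definite matrices), which are generically not symmetric. For those applications — e.g.\ the sub-multiplicativity step $\|M_c M_v^{-1} M_v(v-w)\| \ge \rho_m(M_c M_v^{-1})\|M_v(v-w)\|$ in Lemma~\ref{lem:1} — what is actually needed is precisely the $\inf$-quotient meaning of $\rho_m$, which your change-of-variables argument establishes without any normality assumption. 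So your proof is both correct and better aligned with the operational definitions and downstream uses in the paper than the eigenvalue computation the authors give.
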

\begin{proof}
	If $\l_i$, $i=1,\dots,n$ are the eigenvalues of A then
		\[ \rho(A^{-1}) = \max_i |\l_i^{-1}| = \max_i |\l_i|^{-1} = (\min_i |\l_i|)^{-1} = \rho_m(A)^{-1} \]
\end{proof}
\vspace*{0.1in}

\begin{lem}\label{lem:1}
	Let $V$ be an asymmetric $\e$-net w.r.t. $D_Q^{{ }^{DW}}$, and $v,w\in V$ be Voronoi neighbors of the resulting DW-diagram. 
	If $c\in\O$ is in the Voronoi regions of $v,w$ then
		\[ \|M_c (v-w)\| > \e / (1+\e\s) \]
\end{lem}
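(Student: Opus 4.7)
The plan is to combine the asymmetric $\e$-cover property (which forces the site-to-$c$ distances to be small) with the asymmetric $\e$-packing property (which forces the site-to-site distance to be large in one direction), then transport the resulting bound from $M_v$ or $M_w$ over to $M_c$ using Lemma~\ref{Vlm:cover0}.

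First, since $c$ lies in both $R^{{}^{DW}}_v$ and $R^{{}^{DW}}_w$, the $\e$-cover property guarantees that some site is within DW-distance $\e$ of $c$, and this minimizer must tie $v$ and $w$, so $\|M_c(c-v)\| = D_Q^{{}^{DW}}(v,c) \le \e$ and $\|M_c(c-w)\| = D_Q^{{}^{DW}}(w,c) \le \e$. The asymmetric $\e$-packing property then tells me that at least one of $\|M_v(v-w)\| > \e$ or $\|M_w(v-w)\| > \e$ holds; by symmetry of the argument that follows, I may treat these cases identically and will write it out for the second one.

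Second, I would invoke Lemma~\ref{Vlm:cover0} with the role of $a$ played by $c$ and $b$ played by $w$: since $\|M_c(c-w)\| \le \e$, the lemma yields $\rho(M_w M_c^{-1}) \le 1 + \e\s$. Sub-multiplicativity of the spectral norm then gives
\[ \|M_w(v-w)\| = \|M_w M_c^{-1}\, M_c(v-w)\| \le \rho(M_w M_c^{-1})\,\|M_c(v-w)\| \le (1+\e\s)\,\|M_c(v-w)\|. \]
Combining with the packing bound $\|M_w(v-w)\| > \e$ yields $\|M_c(v-w)\| > \e/(1+\e\s)$, as claimed. In the other case of the packing property, replace $w$ by $v$ throughout, using $\|M_c(c-v)\| \le \e$ to bound $\rho(M_v M_c^{-1}) \le 1+\e\s$, and the same inequality follows.

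The argument is essentially a two-line chain once the ingredients are identified, so I do not anticipate any real obstacle; the only subtle point is remembering that the asymmetric packing property only guarantees one of the two directional distances exceeds $\e$, which is why the bound on $\rho(M_\star M_c^{-1})$ must be derived in whichever direction the packing inequality happens to hold — both of which are supplied by the cover property applied at $c$.
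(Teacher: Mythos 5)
Your proof is correct and follows essentially the same approach as the paper: use the asymmetric packing property to get $\|M_\star(v-w)\|>\e$ for one of $\star\in\{v,w\}$, apply Lemma~\ref{Vlm:cover0} at $(a,b)=(c,\star)$ using the cover bound $\|M_c(c-\star)\|\le\e$, and transport the inequality to $M_c$. The only cosmetic difference is that you bound $\|M_\star(v-w)\|$ from above via sub-multiplicativity of $\rho$, whereas the paper bounds $\|M_c(v-w)\|$ from below via $\rho_m$ and Lemma~\ref{lem:rhom}; the two are equivalent.
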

\begin{proof}
	Since $V$ is an asymmetric $\e$-net, it is either $D_Q^{{ }^{DW}}(v,w) > \e$ or $D_Q^{{ }^{DW}}(w,v)>\e$. 
	Assume w.l.o.g.\  that $D_Q^{{ }^{DW}}(w,v) > \e$ and therefore $\|M_v(v-w)\| > \e$. 
	Since $c$ is in the Voronoi regions of $v,w$, it must be $D_Q^{{ }^{DW}}(v,c) = D_Q^{{ }^{DW}}(w,c)\le \e$. Therefore, by Lemmas~\ref{Vlm:cover0} and~\ref{lem:rhom}, it is
	\begin{eqnarray*}
		\|M_c(v-w)\| &=& \|M_c M_v^{-1}M_v (v-w)\| \ge \rho_m(M_c M_v^{-1}) \|M_v(v-w)\| \\
					&=& \rho(M_v M_c^{-1})^{-1} \|M_v(v-w)\| > \e / (1+\e\s)
	\end{eqnarray*}
\end{proof}
\vspace*{0.1in}

\begin{lem}\label{lem:v2w}
	Let $V$ be an asymmetric $\e$-net w.r.t. $D_Q^{{ }^{DW}}$ (resp. $D_Q^{{ }^{LS}}$), and $v,w\in V$ be Voronoi neighbors of the resulting DW diagram (resp. LS diagram). 
	Then 
		\[  1/k \le \rho_m(M_w M_v^{-1}) \le \rho(M_w M_v^{-1}) \le k \]
\end{lem}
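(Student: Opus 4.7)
The plan is to bound $\rho(M_w M_v^{-1})$ and $\rho_m(M_w M_v^{-1})$ by factoring through an auxiliary point $c$ lying on the common boundary of the Voronoi cells of $v$ and $w$. Since $v,w$ are Voronoi neighbors, such a $c$ exists (it lies in both closed cells). Because both $v$ and $w$ are nearest sites to $c$, the common distance $D(v,c)=D(w,c)$ equals $\min_{u\in V}D(u,c)$, which is at most $\e$ by the asymmetric $\e$-cover property. This is the only place the net assumption enters.

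Next I would translate this distance bound into the form required by Lemma~\ref{Vlm:cover0}. In the DW case, $D_Q^{{ }^{DW}}(v,c)=\|M_c(c-v)\|\le\e$ and $D_Q^{{ }^{DW}}(w,c)=\|M_c(c-w)\|\le\e$, so taking $a=c$ and $b\in\{v,w\}$ in Lemma~\ref{Vlm:cover0} yields
\[ 1-\e\s\le\rho_m(M_v M_c^{-1})\le\rho(M_v M_c^{-1})\le 1+\e\s, \]
and the analogous bounds with $w$ in place of $v$. In the LS case the identity $D_Q^{{ }^{LS}}(v,c)=\|M_v(v-c)\|\le\e$ (and the $w$ analogue) lets me apply Lemma~\ref{Vlm:cover0} with $a=v$ (resp.\ $w$), $b=c$, giving
\[ 1-\e\s\le\rho_m(M_c M_v^{-1})\le\rho(M_c M_v^{-1})\le 1+\e\s, \]
and similarly for $w$. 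The two cases differ only in which side of the product the conjugating factor sits on.

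Then I would factor $M_w M_v^{-1}=(M_w M_c^{-1})(M_c M_v^{-1})$ (or, in the LS setup, $M_w M_v^{-1}=(M_c M_w^{-1})^{-1}(M_c M_v^{-1})$), apply sub-multiplicativity of the spectral norm, and use Lemma~\ref{lem:rhom} to convert inverse spectral norms into inverse $\rho_m$'s. Each factor contributes either $1+\e\s$ or $1/(1-\e\s)$, so the product is at most $(1+\e\s)/(1-\e\s)=k$. The lower bound $\rho_m(M_w M_v^{-1})\ge 1/k$ follows immediately: swap the roles of $v$ and $w$ to get $\rho(M_v M_w^{-1})\le k$, then apply Lemma~\ref{lem:rhom} to conclude $\rho_m(M_w M_v^{-1})=\rho(M_v M_w^{-1})^{-1}\ge 1/k$.

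The argument is essentially routine once $c$ is located, so I don't expect a real obstacle. The only point requiring care is keeping track of which metric ($M_v$, $M_w$, or $M_c$) plays the role of $M_a$ in Lemma~\ref{Vlm:cover0} for each of the DW and LS cases, since the asymmetry of $D_Q^{{ }^{DW}}$ and $D_Q^{{ }^{LS}}$ forces the evaluation point to be $c$ in one case and the site in the other. The final bound $k$ is the same in both cases because the factorization is symmetric in form.
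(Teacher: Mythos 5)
Your proof is correct and follows essentially the same route as the paper: locate a common boundary point $c$, use the cover property to bound $D(v,c)=D(w,c)\le\e$, factor $M_w M_v^{-1}$ through $M_c$, and control each factor via Lemma~\ref{Vlm:cover0} and Lemma~\ref{lem:rhom}, with the DW/LS distinction handled exactly as in the paper by whether $c$ or the site plays the role of $a$ in Lemma~\ref{Vlm:cover0}. The only cosmetic divergence is that you derive the lower bound $\rho_m(M_w M_v^{-1})\ge 1/k$ by symmetry plus Lemma~\ref{lem:rhom}, whereas the paper applies super-multiplicativity of $\rho_m$ directly; both are fine.
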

\begin{proof}
	Since $v,w$ are Voronoi neighbors, there is a point $c\in\O$ that belongs to the Voronoi regions of both  $v$ and of $w$. 
	In an DW diagram, it is $\|M_c(c-v)\| = \|M_c(c-w)\|\le\e$, and therefore, by Lemmas~\ref{Vlm:cover0} and~\ref{lem:rhom}, it is
		\begin{eqnarray*}
			 \rho(M_w M_v^{-1}) &\le& \rho(M_w M_c^{-1}) \rho(M_c M_v^{-1}) \\
			 					&=&  \rho(M_w M_c^{-1}) \rho_m(M_v M_c^{-1})^{-1} \le \frac{1+\e\s}{1-\e\s} = k
		\end{eqnarray*}
	and
		\begin{eqnarray*}
			 \rho_m(M_w M_v^{-1}) &\ge& \rho_m(M_w M_c^{-1}) \rho_m(M_c M_v^{-1}) \\
			 					&=&  \rho_m(M_w M_c^{-1}) \rho(M_v M_c^{-1})^{-1} \ge \frac{1-\e\s}{1+\e\s} = 1/k
		\end{eqnarray*}
		
	In an LS diagram, it is $\|M_v(c-v)\| = \|M_w(c-w)\|\le\e$, and therefore, by Lemmas~\ref{Vlm:cover0} and~\ref{lem:rhom}, it is
		\begin{eqnarray*}
			 \rho(M_w M_v^{-1}) &\le& \rho(M_w M_c^{-1}) \rho(M_c M_v^{-1}) \\
			 					&=&  \rho_m(M_c M_w^{-1})^{-1} \rho(M_c M_v^{-1}) \le \frac{1+\e\s}{1-\e\s} = k
		\end{eqnarray*}
	and
		\begin{eqnarray*}
			 \rho_m(M_w M_v^{-1}) &\ge& \rho_m(M_w M_c^{-1}) \rho_m(M_c M_v^{-1}) \\
			 					&=&  \rho(M_c M_w^{-1})^{-1} \rho_m(M_c M_v^{-1}) \ge \frac{1-\e\s}{1+\e\s} = 1/k
		\end{eqnarray*}
\end{proof}
\vspace*{0.1in}

\begin{lem}\label{lem:23}
	Let $V$ be an asymmetric $\e$-net w.r.t. $D_Q^{{ }^{LS}}$, and $v,w\in V$ be Voronoi neighbors of the resulting LS diagram. Then 
		\[ \e/k \le \|M_v (v-w)\| \le \e(1+k) \]
\end{lem}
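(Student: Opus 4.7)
The plan is to handle the two inequalities separately, using the asymmetric $\e$-packing property of $V$ for the lower bound and the $\e$-cover property (via the existence of a shared boundary point) for the upper bound. Both bounds reduce to exchanging the matrices $M_v$ and $M_w$ applied to the vector $v-w$, and the cost of each exchange is bounded by $k$ through Lemma~\ref{lem:v2w}.

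For the lower bound, the asymmetric $\e$-packing property gives either $D_Q^{{ }^{LS}}(v,w) = \|M_v(v-w)\| > \e$ or $D_Q^{{ }^{LS}}(w,v) = \|M_w(w-v)\| > \e$. In the first case the desired bound $\|M_v(v-w)\| > \e \ge \e/k$ is immediate. In the second case, I would write $M_v(v-w) = (M_v M_w^{-1}) M_w(v-w)$ and apply $\rho_m(M_v M_w^{-1}) \le \|M_v(v-w)\|/\|M_w(v-w)\|$. By Lemma~\ref{lem:rhom}, $\rho_m(M_v M_w^{-1}) = \rho(M_w M_v^{-1})^{-1}$, and by Lemma~\ref{lem:v2w} applied to the LS case, $\rho(M_w M_v^{-1}) \le k$, so $\rho_m(M_v M_w^{-1}) \ge 1/k$ and hence $\|M_v(v-w)\| \ge (1/k)\|M_w(w-v)\| > \e/k$.

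For the upper bound, since $v$ and $w$ are Voronoi neighbors in the LS diagram, there is a point $c$ lying in both $R^{{ }^{LS}}_v$ and $R^{{ }^{LS}}_w$, and the $\e$-cover property forces $\|M_v(v-c)\| \le \e$ and $\|M_w(w-c)\| \le \e$ (since $c$'s nearest LS distance to a site is at most $\e$ and is realized by both $v$ and $w$). The triangle inequality applied after multiplying by $M_v$ yields
\[ \|M_v(v-w)\| \le \|M_v(v-c)\| + \|M_v(c-w)\| \le \e + \rho(M_v M_w^{-1})\,\|M_w(c-w)\|. \]
Lemma~\ref{lem:v2w} (with $v$ and $w$ swapped) gives $\rho(M_v M_w^{-1}) \le k$, so the second term is at most $k\e$, producing $\|M_v(v-w)\| \le \e(1+k)$ as required.

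There is no real obstacle here; the only point requiring mild care is making sure that Lemma~\ref{lem:v2w} is invoked in the correct direction each time, which is why Lemma~\ref{lem:rhom} is needed to convert between $\rho(M_w M_v^{-1})$ and $\rho_m(M_v M_w^{-1})$ when passing from the lower bound on $\|M_w(w-v)\|$ to a lower bound on $\|M_v(v-w)\|$.
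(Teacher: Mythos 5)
Your proof is correct and follows essentially the same route as the paper: the upper bound uses the shared boundary point $c$, the triangle inequality, and Lemma~\ref{lem:v2w}; the lower bound converts $\|M_w(w-v)\|>\e$ into a bound on $\|M_v(v-w)\|$ via the spectral bound. The one small improvement you make is spelling out the case split coming from the asymmetric packing property, which the paper's proof leaves implicit (it silently assumes the direction $\|M_w(w-v)\|>\e$).
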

\begin{proof}
	Since $v,w$ are Voronoi neighbors, there is $c\in\O$ that is in the LS Voronoi regions of $v,w$, 
		and therefore satisfies $\|M_v(v-c)\| = \|M_w(w-c)\|\le\e$. Thus, by Lemmas~\ref{Vlm:cover0} and~\ref{lem:v2w}, it is
	\begin{eqnarray*}
			\|M_v(v-w)\| &\le& \|M_v(v-c)\| + \|M_v(c-w)\| =  \|M_v(v-c)\| + \|M_v M_w^{-1} M_w(c-w)\| \\
						&\le& \|M_v(v-c)\| + \rho(M_v M_w^{-1}) \|M_w(w-c)\| \le \e \left(1+\rho(M_v M_w^{-1})\right) \le \e(1+k)
	\end{eqnarray*}
	and
		\[ \|M_v(v-w)\| \ge \|M_v M_w^{-1} M_w (v-w)\| \ge \rho_m(M_v  M_w^{-1}) \e \ge \e/k \]
\end{proof}
\vspace*{0.1in}

\begin{lem}\label{lem:4}
	Let $V$ be an asymmetric $\e$-net w.r.t. $D_Q^{{ }^{LS}}$, and $v,w\in V$ be Voronoi neighbors of the resulting LS diagram. Then 
		\[ \rho(M_v^{-t} Q_w M_v^{-1} - I) \le \left(\e\s (1+k)\right)^2 + 2 \e\s (1+k) \]
\end{lem}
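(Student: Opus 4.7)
The plan is to reduce the bound on $\rho(M_v^{-t} Q_w M_v^{-1} - I)$ to a bound on $\rho(M_w M_v^{-1} - I)$ via an algebraic identity, and then apply Definition~\ref{def:sigma0} together with Lemma~\ref{lem:23}. This mirrors the manipulation used in Lemma~\ref{Vlm:alpha0}, only with the roles of the center site and the Voronoi endpoints shifted, so the same pattern of reasoning should go through cleanly.

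Concretely, the first step is to use $Q_w = M_w^t M_w$ to write
\[
M_v^{-t} Q_w M_v^{-1} = (M_w M_v^{-1})^t (M_w M_v^{-1}).
\]
Setting $A = M_w M_v^{-1} - I$, so that $M_w M_v^{-1} = I + A$, a direct expansion gives
\[
M_v^{-t} Q_w M_v^{-1} - I = A^t A + A + A^t,
\]
and the triangle inequality (together with submultiplicativity of the spectral norm) yields
\[
\rho\bigl(M_v^{-t} Q_w M_v^{-1} - I\bigr) \le \rho(A)^2 + 2\rho(A).
\]
Since the bound $x^2 + 2x$ is monotone increasing for $x\ge 0$, it suffices to produce a numerical upper bound for $\rho(A)$.

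For that last step I would rewrite $A = (M_w - M_v)M_v^{-1}$ and apply Definition~\ref{def:sigma0} directly with $a=v$, $b=w$ to get $\rho(A) \le \s \|M_v(v-w)\|$. Then Lemma~\ref{lem:23} supplies $\|M_v(v-w)\| \le \e(1+k)$, so $\rho(A) \le \e\s(1+k)$. Substituting this into the previous display gives the claimed bound $\rho(M_v^{-t} Q_w M_v^{-1} - I) \le (\e\s(1+k))^2 + 2\e\s(1+k)$.

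I do not expect any real obstacle here: the only thing to be careful about is keeping the argument-order convention straight when invoking Definition~\ref{def:sigma0} (the definition is written with $M_a(a-b)$ on the right, and we need $v$ in the $a$-slot so that Lemma~\ref{lem:23}, which bounds $\|M_v(v-w)\|$ and not $\|M_w(v-w)\|$, can be applied). Everything else is formal manipulation that parallels the expansion $M_c^{-t}(Q_{c'}-Q_c)M_c^{-1} = A^t A + A + A^t$ already carried out in the proof of Lemma~\ref{Vlm:alpha0}.
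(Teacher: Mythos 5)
Your proof is correct and follows essentially the same route as the paper's: decompose $M_v^{-t}Q_wM_v^{-1}-I$ as $A^tA+A+A^t$ with $A=M_wM_v^{-1}-I$, bound $\rho(A)\le\s\|M_v(v-w)\|$ by Definition~\ref{def:sigma0}, and finish with Lemma~\ref{lem:23}. The only difference is that you spell out the intermediate algebra in a bit more detail than the paper does.
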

\begin{proof}
	Let $A=M_w M_v^{-1}-I$ and $B=M_v^{-t} Q_w M_v^{-1} - I$, where it is $B = A^t A + A + A^t$. 
	By the definition of $\s$ and Lemma~\ref{lem:23}, it is 
		\[ \rho(A) = \rho(M_w M_v^{-1} - I) \le \s \|M_v(v-w)\| \le \e\s (1+k) \]
	and therefore
		\[ \rho(M_v^{-t} Q_w M_v^{-1} - I) = \rho(A^t A + A + A^t) \le \rho(A)^2 + 2\rho(A) \le  \left(\e\s (1+k)\right)^2 + 2 \e\s (1+k) \]
\end{proof}
\vspace*{0.1in}

\end{document}